\let\frak\mathfrak
\def\>{\relax\ifmmode\mskip.666667\thinmuskip\relax\else\kern.111111em\fi}
\def\<{\relax\ifmmode\mskip-.333333\thinmuskip\relax\else\kern-.0555556em\fi}
\def\vsk#1>{\vskip#1\baselineskip}
\def\vv#1>{\vadjust{\vsk#1>}\ignorespaces}
\def\vvn#1>{\vadjust{\nobreak\vsk#1>\nobreak}\ignorespaces}
  \let\ssize\scriptstyle
\let\sssize\scriptscriptstyle
\let\Medskip\medskip
\def\medskip{\par\Medskip}
\let\Bigskip\bigskip
\def\bigskip{\par\Bigskip}
\let\Maketitle\maketitle
\def\maketitle{\Maketitle\thispagestyle{empty}\let\maketitle\empty}
\newtheorem{thm}{Theorem}[section]
\newtheorem{cor}[thm]{Corollary}
\newtheorem{lem}[thm]{Lemma}
\newtheorem{prop}[thm]{Proposition}
\theoremstyle{definition}                                  
\numberwithin{equation}{section}
\theoremstyle{definition}
\newtheorem*{rem}{Remark}
\let\mc\mathcal
\let\nc\newcommand
\let\al\alpha
\let\bt\beta
\let\ka\kappa
\let\la\lambda
\let\phi\varphi
\let\si\sigma
\let\der\partial
\let\ox\otimes
\let\geq\geqslant
\let\leq\leqslant
\let\on\operatorname
\let\bi\bibitem
\let\bs\boldsymbol
\def\C{{\mathbb C}}
\def\Z{{\mathbb Z}}
\def\R{{\mathbb R}}
\def\F{{\mathbb F}}   % new Dec 2019
\def\+#1{^{\{#1\}}}
\def\Gr{\on{Gr}}
\def\gl{\mathfrak{gl}}
\def\beq{\begin{equation}}
\def\eeq{\end{equation}}
\def\be{\begin{equation*}}
\def\ee{\end{equation*}}
\nc{\bea}{\begin{eqnarray*}}
\nc{\eea}{\end{eqnarray*}}
\nc{\bean}{\begin{eqnarray}}
\nc{\eean}{\end{eqnarray}}
\let\ga\gamma
\nc{\Il}{{\mc I_{\bs\la}}}
\nc{\bla}{{\bs\la}}
\nc{\Fla}{\F_\bla}
\nc{\tfl}{{T^*\Fla}}
\nc{\GL}{{GL_n(\C)}}
\nc{\GLC}{{GL_n(\C)\times\C^*}}
\let\sd s %% \def\sd{\dot s}
\def\ddk_#1{\kk_{#1}\<\>\frac\der{\der\<\>\kk_{#1}}}
\def\bul{\mathbin{\raise.2ex\hbox{$\sssize\bullet$}}}
\def\intt{\mathchoice
{\mathop{\raise.2ex\rlap{$\,\,\ssize\backslash$}{\intop}}\nolimits}
{\mathop{\raise.3ex\rlap{$\,\sssize\backslash$}{\intop}}\nolimits}
{\mathop{\raise.1ex\rlap{$\sssize\>\backslash$}{\intop}}\nolimits}
{\mathop{\rlap{$\sssize\<\>\backslash$}{\intop}}\nolimits}}
\let\kk q %% Q
\let\cc c
\let\Ko K
\def\GZ/{Gelfand-Zetlin}
\def\KZ/{{\slshape KZ\/}}
\def\qKZ/{{\slshape qKZ\/}}
\def\XXX/{{\slshape XXX\/}}
\nc{\A}{{\mc A}}
\nc{\hsl}{\widehat{{\frak{sl}_2}}}
\nc{\BC}{{ \mathbb C}}
\nc{\lra}{\longrightarrow}
\nc{\CO}{{\mathcal{O}}}
\nc{\BZ}{{ \mathbb Z}}
\nc{\hfn}{\hat{\frak{n}}}
\nc\Zs{{\Z/p^s\Z}}
\nc\Zo{{\Zs[z]^0}}
\nc\gr{{\on{gr}}}
\nc\fD{{\frak D}}
\begin{document}

\hrule width0pt
\vsk->

%\title[Relations between hypergeometric integrals]
%{Relations between hypergeometric integrals \\
%and weighted hook length formula}
\title{Hypergeometric integrals, hook formulas and Whittaker vectors}

\author[G.\,Felder, A.\,Smirnov, V.\,Tarasov, A.\,Varchenko]
{G.\,Felder$^{\diamond}$, A.\,Smirnov$^\dagger$, V.\,Tarasov$^{\circ}$, A.\,Varchenko$^{\star}$}

\maketitle

\begin{center}
{\it $^{\diamond}$
Department of Mathematics, ETH Zurich, 8092 Zurich, Switzerland}

\vsk.5>
{\it $^\circ\<$Department of Mathematical Sciences,
Indiana University\,--\>Purdue University Indianapolis\kern-.4em\\
402 North Blackford St, Indianapolis, IN 46202-3216, USA\/}

\vsk.5>
{\it $^{\dagger, \star}$ Department of Mathematics, University
of North Carolina at Chapel Hill\\ Chapel Hill, NC 27599-3250, USA\/}

\end{center}

\vsk>
{\leftskip3pc \rightskip\leftskip \parindent0pt \Small
{\it Key words\/}:  Singular vectors, Young diagrams, excited diagrams, hooks, 
 master function, hypergeometric integrals, Whittaker vectors

\vsk.6>
{\it 2020 Mathematics Subject Classification\/}: 
% 11D79 (12H25, 32G34, 33C05, 33E30)
\par}
% 11A07 Congruences; primitive roots; residue systems
% 11B65 Binomial coefficients; factorials; q-identities
% 11D79 Congruences in many variables
% 12H25 p-adic differential equations
% 32G34 Moduli and deformations for ordinary differential equations (e.g., Knizhnik--Zamolodchikov equation)
% 33C05 Classical hypergeometric functions, 2F1
% 33C60 Hypergeometric integrals and functions defined by them (E, G, H and I functions)
% 33C70 Other hypergeometric functions and integrals in several variables
% 33C75 Elliptic integrals as hypergeometric functions
% 33E05 Elliptic functions and integrals
% 33E30 Other functions coming from differential, difference and integral equations
% 33E50 Special functions in characteristic p (gamma functions, etc.)

{\let\thefootnote\relax
\footnotetext{\vsk-.8>\noindent
$^\diamond\<${\sl E\>-mail}:\enspace  giovanni.felder@math.ethz.ch\>,
supported in part by the SNSF under grants 196892, 205607
\\
$^\dagger\<${\sl E\>-mail}:\enspace asmirnov@email.unc.edu\>,
supported in part by the NSF under grant DMS - 2054527 and by
the RSF under grant 19-11-00062
\\
$^\circ\<${\sl E\>-mail}:\enspace vtarasov@iupui.edu\>, % vt@pdmi.ras.ru\>,
supported in part by the Simons Foundation under grants 430235, 852996
\\
$^\star\<${\sl E\>-mail}:\enspace anv@email.unc.edu,
supported in part by the NSF under grant DMS-1954266
}}

\begin{abstract}

We determine the coefficient of proportionality between two multidimensional hypergeometric integrals.
One of them is a solution of the dynamical difference equations 
associated with a Young diagram and the other
is the vertex  integral associated with the Young diagram. The coefficient of proportionality
is the inverse of the product of weighted hooks of the Young diagram.
It turns out that this problem is closely related to the question of describing
the action of the center of the universal enveloping algebra of  $\mathfrak{gl}_n$
on the space of Whittaker vectors
in the tensor product of dual Verma modules with fundamental modules, for which we give
an explicit basis of simultaneous eigenvectors.

\end{abstract}

\vsk1.2>
\rightline{\it In memory of Igor Krichever (1950\:--2022)}
\vsk-.9>
\strut

{\small\tableofcontents\par}

\setcounter{footnote}{0}
\renewcommand{\thefootnote}{\arabic{footnote}}

\section{Introduction}

We determine the coefficient of proportionality between two multidimensional hypergeometric integrals.
One of them is a solution of the dynamical difference equations 
associated with a Young diagram and the other
is the vertex integral associated with the Young diagram. The coefficient of proportionality
is the inverse of the product of weighted hooks of the Young diagram. The same coefficient
appears in the problem of diagonalizing the action of the center of the universal enveloping
algebra of $\mathfrak{gl}_n$ on the space of Whittaker vectors in the tensor product of
dual Verma modules with fundamental modules..

\vsk.2>

The standard basis $(u_\la)$ of a fundamental $\gl_n$-module $U_r$ is labeled by
the Young diagrams $\la$ inscribed in an $(n-r)\times r$-rectangle. One assigns to
a basis vector $u_\la$ a system of dynamical difference equations
\bean
\label{dyn}
&&
\\
\notag
&&
 I(z_1, \dots, z_i+\ka, \dots,z_{n-1},\ka) = a_i(z_1,\dots,z_{n-1},\ka)
  I(z_1,\dots,z_{n-1},\ka), \quad i=1,\dots,n-1,
 \eean
where $I(z_1,\dots,z_{n-1},\ka)$ is an unknown scalar function and 
 $a_i$ are suitable coefficients defined in terms of the $\gl_n$-action on $U_r$.
The equations were introduced in \cite{TV}, and solutions were constructed in \cite{MV}.
A solution $I_\la(z,\ka)$ to \eqref{dyn} is  given by a hypergeometric integral of dimension
equal to the number of boxes in $\la$. 

\vsk.2>

We also introduce another hypergeometric integral
$V_\la(z,\ka)$ of the same dimension  associated with $\la$.
We show that it is proportional to $I_{\lambda}(z,\kappa)$ and determine the coefficient of
proportionality between the two integrals in Theorem \ref{thm 4}
\bean
\label{mf}
V_\la(z,\ka) \,=\,\frac 1{\prod_{\square\in\la} h(\square)(z)}\,I_\la(z,\ka)\,,
\eean
where $h(\square)(z)$ is  the hook-weight of a box $\square$ of $\la$, see the definition in
\eqref{hw}.

%!!!

Our motivation for considering (\ref{mf}) is the following. On the one hand, the enumerative geometry of quiver varieties is controlled by two important objects: the vertex function and the capping operator of a quiver variety \cite{O}. These are the generating functions counting quasimaps to the quiver variety with nonsingular or relative boundary conditions.

On the other hand, with any quiver variety $X$ one can associate a hypergeometric integral. The 3D-mirror symmetry predicts that this integral computes the vertex function of the mirror variety $X^{!}$. The capping operator of $X^{!}$ is obtained in the same way  with additional insertion of the stable envelope functions  to the integral \cite{AO} (also known as weight functions in the theory of qKZ equations). For example, see \cite{SmV1,SmV2} where the integral formulas of this type are discussed in the case of cotangent bundle over Grassmannian $X^{!}=T^{*}Gr(k,n)$.

Let $X$ be the zero-dimensional Nakajima quiver variety of type $A$ associated to a Young diagram  $\la$ \cite{DS1,DS2}. The hypergeometric integral assigned to $X$ is given by the function $V_{\la}(z,\ka)$ and the integral for the capping operator is given by $I_\la(z,\ka)$. Since the cohomology of $X_{\lambda}$ are one-dimensional, it is expected from \cite{O} that both integrals are proportional. Our formula (\ref{mf}) establishes  the coefficient of proportionality explicitly. 

Note, however, that in this case, the mirror $X^{!}_\lambda$ is a 2$|\lambda|$-dimensional variety which can not be realized as a quiver variety. Thus, the vertex function for $X^{!}_\lambda$ is not defined by the methods of \cite{O}. 
To clarify this point,  we refer to the function $V_{\lambda}(z,\kappa)$ as  the vertex integral, instead of the ``vertex function of $X^{!}_{\lambda}$''.

We note also that the coefficient in (\ref{mf}) has a geometric meaning: the mirror variety $X^{!}_\lambda$ is equipped with a torus action with  unique fixed point. The denominator of the coefficient in (\ref{mf}) is the product over the half of the tangent weights at this point with parameters $z_1,\dots,z_n$ understood as the equivariant parameters of the torus \cite{DS2}.

%!!!

\vsk.2>

To determine the coefficient of proportionality we consider the tensor product $M\ox U_r$\,,\
where $M$ is a Verma module and
analyze singular weight vectors in $M\ox U_r$ of the form
\bea
v(\la) = \sum_{\mu\leq \la} v_\mu\ox u_\mu\,, \qquad \on{with}  \ v_\mu\in M.
\eea
The collection of vectors $(v_\mu)$ is quite a nontrivial object. We simplify it by choosing
a suitable linear function $\psi:M\to\C$ and considering instead
the collection of numbers
$(\psi(v_\mu))$. We develop simple recurrence relations and formulas for these numbers.
We also show that
\bea
V_\la(z,\ka)\big/I_\la(z,\ka) = \psi(v_\emptyset)/ \psi(v_\la).
\eea
Together with formulas for $\psi(v_\mu)$ this equation proves formula \eqref{mf}.

The numbers $\psi(v_\mu)$ are functions of the highest
weight of $M$ associated to the skew Young diagram $\lambda/\mu$.
We show that these numbers arise in
the problem of diagonalizing the action of the center $Z$ of the
universal enveloping algebra on the space of Whittaker vectors of
$M'\otimes U_r$ where $M'$ is the dual of the Verma module $M$.

A Whittaker vector in a $\mathfrak gl_n$-module is a vector on which
the nilpotent subalgebra of lower triangular matrices acts via a fixed
regular character, see Section \ref{sec 3.4}. The center $Z$ acts on
the space of Whittaker vectors $\operatorname{Wh}(M'\otimes U_r)$. A
Whittaker vector $\beta\in M'\otimes U_r$ is uniquely determined by
its contraction $\beta(v)\in U_r$ with the highest weight vector $v$ of $M$.
For generic highest weight of $M$, we show that a basis of
eigenvectors is given by
\[
  \beta_{\lambda}(v)
=
\sum_{\mu\leq \lambda}\sum_{\nu\in E(\lambda/\mu)}
\frac{1}{\prod_{\square\in\lambda\smallsetminus\nu}h(\square)(z)}\,u_\mu,
\]
where $\lambda$ runs over the set of Young diagrams fitting in an $(n-r)\times r$
rectangle and $z$ is an affine function of the highest weight of $M$.
The set $E(\lambda/\mu)$ is the set of Ikeda--Naruse excited diagrams,
which are subsets of $\lambda$ obtained from $\mu$ by moving boxes
according to certain rules. The coefficient of $u_\mu$ is $\psi(v_\mu)$. In particular
the coefficient of $u_\emptyset$ is the coefficient
of proportionality in \eqref{mf} (the coefficient of $u_\lambda$ is normalized to be 1).

\subsection*{Aknowledgements}

The fourth author thanks FIM at ETH Zurich and IHES in Bures-sur-Yvette for hospitality
in June-July 2023. The fourth author also thanks E.\,Mukhin for useful discussions.

\section{Singular vectors}

\subsection{Linear function $\psi$} \label{sec 2.1}

Consider the complex Lie algebra $\gl_n$ with standard generators
$e_{ij}$, $i,j=1,\dots,n$,  simple roots $\al_i$, $i=1,\dots,n-1$, half-sum 
$\rho$ of positive roots. Denote $e_i=e_{i,i+1}$, $f_i=e_{i+1,i}$,  
$h_i=e_{i,i}-e_{i+1,i+1}$ for $i=1,\dots,n-1$.

\vsk.2>

Let $M$ be a $\gl_n$ Verma module with highest weight vector $v$. 
Define a linear function $\psi : M\to \C$ as follows. 
Any vector $v'\in M$ can be written (in general non-uniquely)
 as a finite linear combination of the products of elements $f_1,\dots, f_{n-1}$
 applied to $v$, 
\bea
v' = \sum
%_{\{i_m, i_{m-1},\dots, i_1\}} 
c_{i_m, i_{m-1},\dots, i_1} f_{i_m}f_{i_{m-1}}\dots f_{i_1}v\,,
\eea
where $1\leq i_j\leq n-1$ and $c_{i_m, i_{m-1},\dots, i_1} \in\C$.
Set
\bean
\label{psi}
\psi(v') = \sum c_{i_m, i_{m-1},\dots, i_1}\,.
\eean
The function $\psi$ is well-defined since it is zero on Serre's relations
$f_i^2f_{i\pm 1} - 2 f_if_{i\pm 1}f_i + f_{i\pm 1}f_i^2=0$. It is in fact a Whittaker
vector in the dual of $M$, as will be discussed in Section \ref{sec 3.4}.

% and weight decomposition
%$V_\La =\oplus_\la V_\La[\mu]$, where $\la=\La - \sum_{i=1}^{n-1} k_i\al_i$ 
%with nonnegative integers $k_i$.

\subsection{Fundamental representations}\label{sec 2.2}

Let $U_r$, $r=1,\dots,n-1$, be the $r$-th fundamental representation of $\gl_n$.
Its highest weight is $(1,\dots, 1, 0,\dots,0)$ with $r$ ones.
\vsk.2>

The $U_1$ is the vector representation $\C^n$ with standard basis $u_i$, $i=1,\dots,n$.
The $U_r$ is the $r$-th exterior power $\wedge^r \C^n$ of the vector representation
with standard basis 
\bean
\label{wedge w}
u_I := u_{i_1} \wedge u_{i_2} \wedge \dots \wedge u_{i_r} \,,
\eean
where $I=\{i_1< i_2<\dots <i_r\}$ is any  $r$-element subset of $\{1,\dots,n\}$.
Denote by $\mc I_r$ the set of such subsets.

\vsk.2>

The decomposition 
\bea
U_r = \oplus_{I\in\mc I_k}\C u_I\,
\eea
is the weight decomposition. We have $e_{ii}u_I=u_I$ if $i\in I$ and $e_{ii}u_I=0$ otherwise.
Thus, the weight 
$w(u_I)$ 
of $u_I$ is the 
$n$-vector whose $i$-th coordinate is 1 if $i\in I$ and is 0 otherwise.
The vector $u_{I^{min}}$ with $I^{min}=\{1<2 <\dots< r\}$ is a highest weight vector.

\vsk.2>

\subsection{Young diagrams}\label{sec 2.3}

The set  $\mc I_r$ is identified with the set of sequences of nonnegative integers 
\bea
\{0\leq \la_1\leq \dots\leq \la_r\leq n-r\}
\eea
by the formula $\{i_1< i_2<\dots <i_r\}\,\mapsto\, \{i_1-1\leq i_2-2\leq \dots\leq i_r-r\}$.
The set of such sequences is identified with the set of Young diagrams inscribed in the
 $(n-r)\times r$-rectangle $R$.
Thus  the set $\mc I_r$ is identified with the set of Young diagrams inscribed in the rectangle $R$.
 
 For example,
$I^{min}$ corresponds to the empty Young diagram $\emptyset$,
and the rectangle $R$ corresponds to
the subset $\{n-r+1<n-r+2<\dots<n\}$.

\vsk.2>

We conclude that the basis $(u_I)$ of $U_r$ is labeled by the Young diagrams.
 A vector $u_I$ will also be denoted $u_\la$ if $I$ corresponds 
 to a Young diagram $\la$. 
 The weight $w(u_I)$ of $u_I$ will also be denoted by $w(\la)$.

\vsk.2>

The set of Young diagrams is partially ordered with respect to inclusion of the diagrams.
We write  $\mu\leq \la$ if the Young diagram $\la$ contains the Young diagram $\mu$.
%In that case, $w(\mu) =w(\la) -k_1\al_1-\dots -k_{n-1}\al_{n-1}$ for some nonnegative integers
%$k_1,\dots,k_{n-1}$.

\subsection{Singular vectors}\label{sec 2.4}

Let $M$  be the $\gl_n$ Verma module with highest weight $t-\rho$
and highest weight vector $v$\,, where $t=(t_1,\dots,t_n)$.

Fix a Young diagram $\la\in\mc I_r$. Consider the vector subspace $\cap_{i=1}^{n-1}\operatorname{Ker}e_i$ 
of singular vectors in $M\ox U_r$ of weight $w(\la) +t-\rho$. For generic $t$, this space is one-dimensional
with a generator of the form
\bean
\label{main}
v(\la) := v\ox u_\la + \sum_{\mu<\la} v_\mu\ox u_\mu\,
\eean
for suitable vectors $v_\mu\in M$.
Recall the linear function $\psi :M\to\C$. Define the following scalar functions 
$g_{\la/\mu} $ of $t$:
\bean
\label{lm}
g_{\la/\mu} = \psi(v_\mu) \quad \on{for}\quad \mu<\la\quad
\on{and}\quad g_{\la/\la} = 1.
\eean
More precisely, each $g_{\la/\mu}$ is a function of $z=(z_1,\dots,z_{n-1})$
where $z_i=t_{i+1}-t_i$.

\vsk.2>

The main result of this paper is recurrence relations and a formula for functions $g_{\la/\mu}$.

\subsection{Hooks}
\label{sec hooks}

The $(n-r)\times r$-rectangle $R$ lies in the positive quadrant in $\R^2$ and consists of unit
boxes
$\square_{i,j}$, $i=1,\dots, n-r$, $j=1,\dots, r$. The center of a box $\square_{i,j}$ has coordinates
$\big(i- \frac12, j-\frac12\big)$. Every nonempty Young diagram $\la\in\mc I_r$ contains the corner box $\square_{1,1}$.

To every box $\square_{i,j}$ we assign one of $z_1,\dots, z_{n-1}$ by the rule:
\bea
z(\square_{i,j})\, : =\, z_{i-j+r}\,.
\eea
For example, $z(\square_{1,r})=z_1$,
$z(\square_{1,1})=z_r$, $z(\square_{n-r,1})=z_{n-1}$,
$z(\square_{n-r,r})=z_{n-r}$.
We say that $z_{i-j+r}$ is the $z$-label of a box $\square_{i,j}$.

\vsk.2>
Recall that a hook $H_\la(\square_{i,j})$ of a box $\square_{i,j}$ in a Young diagram $\la$
is the set of all boxes $\square_{a,b}$ in $\la$ such that
$a=i$, $b\geq j$,\ or $a\geq i$, $b=j$.
We define the hook-weight of a box $\square$ of $\la$ by the formula
\bean
\label{hw}
h(\square) = 1 + \sum_{\square'\in H_\la(\square)} z(\square')\,.
\eean

\begin{thm}
\label{thm 2.1}

We have
\bean
\label{mu/em}
g_{\la/\emptyset} = \frac 1{\prod_{\square\in\la} h(\square)}\,.
\eean

\end{thm}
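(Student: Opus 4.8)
The plan is to establish the formula $g_{\lambda/\emptyset} = 1/\prod_{\square\in\lambda} h(\square)$ by setting up a recurrence for the $g_{\lambda/\mu}$ and then verifying that the right-hand side of \eqref{mu/em} satisfies the same recurrence with the same initial condition $g_{\emptyset/\emptyset}=1$. The recurrence should come from the singular vector condition $e_i v(\lambda)=0$ for $i=1,\dots,n-1$. Writing $v(\lambda)=\sum_{\mu\le\lambda} v_\mu\ox u_\mu$ with $v_\lambda=v$ and applying $e_i\ox 1 + 1\ox e_i$, one gets, for each $\mu$, a relation of the form $e_i v_\mu + \sum_{\mu'} (\text{coefficient from }e_i u_{\mu'})\, v_{\mu'}=0$, where the sum runs over Young diagrams $\mu'$ obtained from $\mu$ by adding one box whose $z$-label is $z_i$ (equivalently, $\mu'/\mu$ is a removable box of $\mu'$ in row/column position giving $z$-label $z_i$; there is at most one such for each $i$). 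Applying the linear functional $\psi$ and using the key property $\psi(e_i w) = \psi(f_i w) + (\text{linear-in-}t)\,\psi(w)$-type identities — more precisely, $\psi(e_{i}w)$ can be rewritten via the commutation relations $[e_i,f_j]$ and the fact that $\psi$ kills nothing but reads off total coefficients — should convert each operator relation into a \emph{scalar} recurrence among the $g_{\lambda/\mu}$ as functions of $z=(z_1,\dots,z_{n-1})$.

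**Next I would** make precise the scalar recurrence. The expected shape is: for a box $\square$ that is an "addable corner" of $\mu$ inside $\lambda$ with $z$-label $z_i$, the functional $\psi$ applied to the $e_i$-relation yields something like $h_\mu(\square)\cdot g_{\lambda/\mu} = g_{\lambda/\mu\cup\square}$, or a sum of such terms over all addable corners of $\mu$ with a given $z$-label, where $h_\mu(\square)$ is a hook-type weight but with hooks computed relative to the position of $\square$ inside the \emph{ambient} diagram rather than $\mu$. Summing over all $i$ (or choosing $i$ so that exactly one box is addable) should give a clean single-step recurrence expressing $g_{\lambda/\mu}$ in terms of the $g_{\lambda/\mu'}$ with $|\mu'|=|\mu|+1$. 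Since $g_{\lambda/\lambda}=1$, one can then induct \emph{downward} on $|\mu|$ from $|\mu|=|\lambda|$ to $|\mu|=0$; for the specific statement we only need $\mu=\emptyset$, so in fact we can induct on $|\lambda|$ by peeling off a corner box of $\lambda$ adjacent to the empty diagram — i.e. the unique box $\square_{1,1}$ — and relating $g_{\lambda/\emptyset}$ to $g_{\lambda/\square_{1,1}}$, which by a shift of variables is $g_{\lambda'/\emptyset}$ for a smaller configuration, times the reciprocal of $h(\square_{1,1})$.

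**To close the induction** I would verify that $1/\prod_{\square\in\lambda} h(\square)$ satisfies the recurrence. When a box $\square$ is added, the product $\prod_{\square'\in\lambda\setminus\mu} h(\square')$ changes in a controlled way: removing the added box from the "skew" product multiplies by $h(\square)$, but hook-weights of the remaining boxes in the same row/column also shift because the hook of a box in $\mu$ grows when we pass to $\mu\cup\square$ — however, since $h(\square')$ is defined using hooks \emph{in $\lambda$}, not in $\mu$, those ambient hook-weights do not change, only the set of boxes in the product changes. This is the combinatorial heart: one must check the identity relating $\sum$ over addable corners of $\mu$ (with fixed $z$-label) of $h_\mu(\square)\prod_{\square'\in\lambda\setminus\mu}h(\square')^{-1}$ to $\prod_{\square'\in\lambda\setminus(\mu\cup\square)}h(\square')^{-1}$, which should reduce, after clearing the common factor $\prod_{\square'\in\lambda\setminus\mu,\ \square'\ne\square}h(\square')^{-1}$, to a purely numerical hook identity of the classical "branching rule for hook products" type.

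**The main obstacle** I anticipate is twofold: first, carrying out the $\psi$-computation on the $e_i$-relation accurately — tracking exactly which coefficients arise from $e_i$ acting on the exterior-power basis vectors $u_\mu$ (signs from the wedge product) and from the commutator terms $[e_i, f_{i_k}\cdots f_{i_1}]v$ inside $M$ — and showing these assemble into the hook-weight $h(\square)=1+\sum_{\square'\in H_\lambda(\square)} z(\square')$ rather than some other linear combination of the $z_i$'s; the appearance of the full hook (not just the arm or leg) is the delicate point and likely requires an inductive bookkeeping over the operators $f_{i}$ in a reduced word. Second, pinning down the precise form of the recurrence when $\mu$ has several addable corners with the same $z$-label is subtle, though for the route via peeling the corner $\square_{1,1}$ of $\lambda$ this multiplicity issue is avoided, which is why I would prefer that route. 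I expect the paper resolves both by first proving a general recurrence for $g_{\lambda/\mu}$ (the "main result: recurrence relations") and then deducing \eqref{mu/em} as the $\mu=\emptyset$ specialization, with the hook identity verified by an explicit induction on $|\lambda|$.
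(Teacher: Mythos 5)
Your overall strategy --- extract a recurrence for the $g_{\la/\mu}$ from the singular-vector condition by applying $\psi$, then verify that the hook product satisfies it --- matches the paper's outline, but two of your concrete steps break down. First, applying $\psi$ to the relation coming from a \emph{single} $e_i$ does not yield a closed scalar recurrence in the unknowns $g_{\la/\mu}=\psi(v_\mu)$, because the functional $v'\mapsto\psi(e_i v')$ on a weight space of $M$ is \emph{not} proportional to $\psi$ there: for example $\psi(e_1 f_1f_1f_2v)=-(2z_1+2)$ while $\psi(e_1 f_1f_2f_1v)=-(2z_1+3)$, although $\psi$ equals $1$ on both vectors. So the $e_i$-relation involves information about $v_\mu$ beyond $\psi(v_\mu)$, and your posited recurrence ``$h_\mu(\square)\,g_{\la/\mu}=g_{\la/\mu\cup\square}$ for the addable box with label $z_i$'' cannot be derived this way. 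The paper's key point (Lemma \ref{lem psi}) is that only the \emph{sum} $e_1+\dots+e_{n-1}$ acts on $\psi$ by a scalar on each weight space, and that scalar is $-s_{\la/\mu}$, the $z$-content of the entire skew shape $\la/\mu$ --- not a hook weight of any single box. The correct recurrence (Theorem \ref{thm 2}) is $s_{\la/\mu}\,g_{\la/\mu}=\sum_{\mu'}g_{\la/\mu'}$, summed over \emph{all} one-box additions $\mu<\mu'\le\la$ irrespective of $z$-label.

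Second, your proposed induction --- peel off $\square_{1,1}$ and identify $g_{\la/\square_{1,1}}$ with $g_{\la'/\emptyset}$ for a smaller configuration times $h(\square_{1,1})^{-1}$ --- fails because $g_{\la/\mu}$ for $\mu\neq\emptyset$ depends on the ambient diagram $\la$ and is not of the form $1/\prod h$ at all: by Theorem \ref{thm 3} it equals $\bigl(\sum_{\nu\in E(\la/\mu)}\prod_{\square\in\nu}h(\square)\bigr)\big/\prod_{\square\in\la}h(\square)$, a sum over excited diagrams (see the example of $g_{\la/\mu_2}$ after Theorem \ref{thm 3}, whose numerator $z_1+2z_2+z_3+2$ is not a product of hooks). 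Relatedly, the coefficient relating $g_{\la/\emptyset}$ to $g_{\la/\square_{1,1}}$ is $s_{\la/\emptyset}$, which is not the hook weight $h(\square_{1,1})$ and in fact cancels out of the final answer, as one sees in the worked example \eqref{ex}. So the recurrence does not close on the $\mu=\emptyset$ instances, and a one-box induction on $|\la|$ cannot be run. The paper instead iterates the recurrence all the way down to get $g_{\la/\emptyset}=\sum_{\emptyset=\mu_1<\dots<\mu_d<\la}\prod_{i}s_{\la/\mu_i}^{-1}$, a sum over saturated chains (Corollary \ref{cor 1}), and then evaluates this sum via Naruse's weighted hook-length formula for skew shapes after a change of variables $z\mapsto y$ absorbing the constant terms (Section \ref{sec 2.8}). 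That identity, which you rightly call the combinatorial heart, is a substantial external input imported from \cite{Na}, not a short branching computation.
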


For example, if $(r,n)=(2,4)$ and $\la = (2,1)$. Then $\la$ consists of the
three boxes
$\square_{1,1}$, $\square_{1,2}$, $\square_{2,1}$, and
\bea
g_{\la/\emptyset} = \frac 1 {(z_1+z_2+z_3+1)(z_1+1)(z_3+1)}\,.
\eea

\vsk.2>

In Section \ref{sec exc} we give a formula for all coefficients $g_{\la/\mu}$. 
Theorem \ref{thm 2.1} follows from Theorem \ref{thm 3} below.
%is proved in Section \ref{sec 2.8}.

\subsection{Recurrence relations}\label{sec 2.6}

Let $\la/\mu$ be a skew-diagram. Let $k_i$ be the number of boxes in $\la/\mu$
with $z$-label $z_i$, where $i=1,\dots,n-1$. 
 We put $k_n=0$.
Define the {\it $z$-content} of $\la/\mu$ by the formula
\bea
s_{\la/\mu} = \sum_{i=1}^{n-1} k_i(k_i-k_{i+1} + z_i)\,.
\eea
For example, if $(r,n)=(2,4)$, $\la=(2,2)$, $\mu_1=\emptyset$, $\mu_2=(1)$,
$\mu_3=(1,1)$, $\mu_4=(2)$, $\mu_5=(2,1)$, then
\bea
&&
\phantom{aaa}
s_{\la/\mu_1}=z_1+2z_2+z_3+2,
\qquad
s_{\la/\mu_2}=z_1+z_2+z_3+1,
\\
&&
s_{\la/\mu_3}=z_1+z_2+1,
\qquad
s_{\la/\mu_4}= z_2+z_3+1,
\qquad
s_{\la/\mu_5}=z_2+1.
\eea

\begin{thm}
\label{thm 2}
The following recurrence relations hold:
\bean
\label{rr}
g_{\la/\mu} = \frac 1{s_{\la/\mu}} \sum_{\mu'} g_{\la/\mu'}\,,
\eean
where the sum is over all the Young diagrams $\mu'$ such that
$\mu<\mu'\leq \la$ and the skew-diagram $\mu'/\mu$ consists of one box.

\end{thm}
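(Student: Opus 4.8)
The plan is to derive the recurrence by imposing the singular-vector condition $e_i\,v(\la)=0$ and then applying the linear functional $\psi$. First I would compute the action of $e_i$ on the ansatz \eqref{main}: since $e_i$ acts on $U_r$ in a combinatorial way on the basis $(u_\mu)$ — namely $e_i\,u_\mu$ is $u_{\mu'}$ (up to sign $\pm1$, which here is $+1$ with the chosen labeling) whenever $\mu'/\mu$ is a single box with $z$-label $z_i$, and is $0$ otherwise — and on $M\ox U_r$ via $e_i = e_i\ox 1 + 1\ox e_i$, the condition $e_i\,v(\la)=0$ becomes, for each $\mu$ with $\mu\le\la$ and each $i$, the relation
\bea
e_i\,v_{\mu'} + \sum_{\mu} v_\mu = 0
\eea
summed appropriately, where $\mu'$ is obtained from $\mu$ by removing a box with $z$-label $z_i$ and the second sum is over diagrams $\nu$ with $\nu/\mu'$ one box of $z$-label $z_i$. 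More precisely, collecting the coefficient of a fixed basis vector $u_{\mu'}\ox$ (nothing — i.e. the $U_r$-component $u_{\mu'}$ after applying $e_i$) I get
\bea
0 = e_i v_{\mu'} + \sum_{\nu:\ \nu/\mu'=\square,\ z(\square)=z_i} v_\nu .
\eea
This is the system that determines the $v_\mu$ recursively from $v_\la=v$.

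Next I would apply $\psi$. The key input is the behavior of $\psi$ under $e_i$, i.e. I need $\psi(e_i v_{\mu'})$. Writing $v_{\mu'}$ as a combination of words in the $f_j$'s applied to $v$, one uses the commutation relations in $\gl_n$: $e_i f_i v = h_i v + f_i e_i v$, $[e_i,f_j]=0$ for $|i-j|\ge 2$, and $[e_i,f_{i\pm1}]=\mp e_{i,i\pm1\mp\dots}$ type relations that move an $e$ to the right where it eventually annihilates $v$. The upshot, which I would prove by induction on word length, is a formula of the shape $\psi(e_i\, w\,v) = c_i(\text{weight of }w)\cdot\psi(w\,v)$ where $c_i$ is an explicit affine function: if $v_{\mu'}$ has weight $w(\mu') + t - \rho$ ... actually its weight in $M$ is $w(\la)-w(\mu')$ lower, so $e_i$ raises it by $\al_i$, and one gets $\psi(e_i v_{\mu'}) = -s_{\la/\mu'}\,\psi(v_{\mu'})$ — this sign and the precise identification of the coefficient with the $z$-content $s_{\la/\mu'}$ is the crucial computation. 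Here $h_i$ acting contributes $\langle w(\la)-w(\mu'), \text{the relevant expression}\rangle$, and the combinatorics of how many boxes of each $z$-label $\mu'$ has relative to $\la$ — together with the fact that the $z$-label of $\square_{a,b}$ is $z_{a-b+r}$ and adjacent boxes in a row/column shift the label by $\pm1$ — reassembles exactly into $\sum_i k_i(k_i - k_{i+1} + z_i)$.

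Granting that identity, the $\psi$-image of the singular-vector relation reads $-s_{\la/\mu'}\,\psi(v_{\mu'}) + \sum_{\nu:\ \nu/\mu'=\square} \psi(v_\nu) = 0$, i.e. $s_{\la/\mu'}\,g_{\la/\mu'} = \sum_{\mu'<\nu\le\la,\ \nu/\mu'\text{ one box}} g_{\la/\nu}$, which (renaming $\mu'\to\mu$, $\nu\to\mu'$) is exactly \eqref{rr}. One must separately check the boundary case $\mu = \la$: then $v_\la = v$, $\psi(v)=1=g_{\la/\la}$, and the recurrence is not applied there, consistent with the definition in \eqref{lm}; one should also verify that $s_{\la/\mu}\ne0$ for generic $t$ so that the division is legitimate and the $g_{\la/\mu}$ are well-defined, which follows since $s_{\la/\mu}$ is a nonzero affine function of $z$ (its constant term is the positive integer $\sum_i k_i(k_i-k_{i+1})$, which is $\ge1$ whenever $\mu\ne\la$).

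The main obstacle I expect is the second paragraph: pinning down the functional identity $\psi(e_i\,v_{\mu'}) = -s_{\la/\mu'}\,\psi(v_{\mu'})$ with the correct sign and matching the coefficient to $s_{\la/\mu'}$ exactly. The difficulty is twofold — first, $\psi$ is defined via a non-unique presentation of vectors as words in the $f_j$, so one must be careful that $\psi(e_i\,\cdot\,)$ is genuinely a well-defined functional (it is, because $e_i$ preserves the left ideal of Serre relations up to Serre relations), and second, the weight bookkeeping that produces $k_i(k_i-k_{i+1}+z_i)$ requires translating between the $\gl_n$-weight language ($h_j$-eigenvalues) and the box/$z$-label language for Young diagrams. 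I would handle this by first establishing, for an arbitrary weight vector $w\,v\in M$ of weight $t-\rho-\sum_j m_j\al_j$, the clean identity $\psi(e_i\, w\, v) = \big(m_{i-1} - 2m_i + m_{i+1} + (t_{i+1}-t_i) + m_i\big)\cdot(\text{something})$... and then specializing $m_j = $ (number of boxes of $\la/\mu'$ with $z$-label $z_j$) $= k_j$, at which point $m_{i-1}-2m_i+m_{i+1} = (k_{i-1}-k_i) - (k_i - k_{i+1})$ and the telescoping against the $h_i$-contribution collapses to $k_i(k_i-k_{i+1}) + k_i z_i$ after summing the local contributions over the $k_i$ boxes carrying label $z_i$.
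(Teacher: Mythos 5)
Your overall strategy --- hit the singular vector $v(\la)$ with raising operators and then apply $\psi$ --- is the same as the paper's, but the key identity you rely on is false, and it fails exactly at the step you flag as the main obstacle. You want $\psi(e_i\,v')$ to be a fixed multiple of $\psi(v')$ for each \emph{individual} $i$ and every $v'$ in a given weight space of $M$ (whether that multiple is $-s_{\la/\mu}$, as you first write --- which is already internally inconsistent, since summing over $i$ would then give $-(n-1)s_{\la/\mu}$ --- or its $i$-th summand $-k_i(k_i-k_{i+1}+z_i)$, as your last paragraph suggests). Neither version holds: the functional $\psi\circ e_i$ restricted to a weight space is \emph{not} proportional to $\psi$. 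Concretely, $f_1f_2v$ and $f_2f_1v$ both lie in $M[\,t-\rho-\al_1-\al_2]$ and both have $\psi=1$, but
\[
e_1f_1f_2v=h_1f_2v=\bigl((t-\rho)_1-(t-\rho)_2+1\bigr)f_2v,\qquad
e_1f_2f_1v=f_2h_1v=\bigl((t-\rho)_1-(t-\rho)_2\bigr)f_2v,
\]
so $\psi(e_1f_1f_2v)-\psi(e_1f_2f_1v)=1\neq0$: the coefficient depends on the word, not only on the weight, and your proposed induction on word length would have surfaced exactly this. Only the \emph{sum} satisfies a clean identity, $\psi\bigl((e_1+\cdots+e_{n-1})v'\bigr)=-s_{\la/\mu}\,\psi(v')$ on all of $M[\,t-\rho-\sum k_i\al_i]$, because the word-dependent discrepancies cancel when you sum over $i$; this is the paper's Lemma \ref{lem psi} and is the only input it uses.

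Consequently the finer, per-$i$ recurrence your argument would produce, namely $k_i(k_i-k_{i+1}+z_i)\,g_{\la/\mu}=\sum_{\nu}g_{\la/\nu}$ with $\nu/\mu$ a single box of $z$-label $z_i$, is simply false: in the paper's running example $\la=(2,2)$ it would give $z_1\,g_{\la/\mu_2}=g_{\la/\mu_4}$, which contradicts the values computed in \eqref{ex}. The repair is to use only the single scalar consequence $(\psi\otimes\mathrm{id})\bigl((e_1+\cdots+e_{n-1})v(\la)\bigr)=0$ of the singular-vector condition, collect the coefficient of each $u_\mu$, and invoke the summed identity; that yields \eqref{rr} exactly. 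Your remaining points --- well-definedness of $\psi$ modulo the Serre relations, the boundary case $\mu=\la$, and nonvanishing of $s_{\la/\mu}$ for generic $z$ --- are fine but peripheral.
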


For example, if $(r,n) = (2,4)$, \, $\la=(2,2)$, and $\mu_i$ are as before, then we have
\bean
\label{ex}
&&
\\
\notag
g_{\la/\emptyset} 
&=&
 \frac 1{z_1+2z_2+z_3+2} \,g_{\la/\mu_2}
= \frac 1{(z_1+2z_2+z_3+2)(z_1+z_2+z_3+1)}\, [ g_{\la/\mu_3}+g_{\la/\mu_4}]
\\
\notag
&&
= \frac 1{(z_1+2z_2+z_3+2)(z_1+z_2+z_3+1)} \Big(\frac1{z_1+z_2+1}
+ \frac1{z_2+z_3+1}\Big) g_{\la/\mu_5}
\\
\notag
&&
= \frac 1{(z_1+z_2+z_3+1)(z_1+z_2+1)(z_2+z_3+1)(z_2+1)}\,
\eean
where in the last step we used $g_{\la/\la}=1$.

\begin{proof}

Denote $\bt =t-\rho - \sum_{i=1}^{n-1} k_i\al_i$ where $k_i$ are some nonnegative integers.
Let $M[\bt]$ be the weight subspace of $M$  of weight $\bt$.

\begin{lem}
\label{lem psi}
For any $v'\in M[\bt]$, we have
\bean
\label{psi v}
\psi ((e_1+ \dots + e_{n-1}) v') \,=\, - \sum_{i=1}^{n-1}k_i(k_i-k_{i+1} + z_i) \,\psi(v').
\eean
\end{lem}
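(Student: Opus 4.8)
The plan is to compute directly how $\psi$ interacts with left-multiplication by $e_k$ on a weight vector. First I would reduce to the case where $v'$ is a monomial $f_{i_m}\cdots f_{i_1}v$; since $\psi$ is linear and well-defined on all of $M$, it suffices to verify \eqref{psi v} on such a spanning set, with the caveat that the right-hand side depends only on the weight $\bt$ of $v'$ (through the multiplicities $k_i$), so the identity is automatically compatible with the relations among monomials. The core computation is then to move each $e_k$ past the string of lowering operators using the commutators $[e_k,f_j]$. We have $[e_k,f_j]=0$ unless $j=k$, in which case $[e_k,f_k]=h_k$, and $e_k v=0$. So
\[
e_k\, f_{i_m}\cdots f_{i_1} v \;=\; \sum_{l\,:\,i_l=k} f_{i_m}\cdots f_{i_{l+1}}\, h_k\, f_{i_{l-1}}\cdots f_{i_1} v .
\]
Next I would push $h_k$ to the right against $f_{i_{l-1}}\cdots f_{i_1}$: since $[h_k,f_j]=-\langle\al_j,\al_k^\vee\rangle f_j$ is again proportional to $f_j$, we get $h_k f_{i_{l-1}}\cdots f_{i_1} v = f_{i_{l-1}}\cdots f_{i_1}(h_k + c_l) v$ for an explicit scalar $c_l$, and $h_k v$ acts by the highest weight $\langle t-\rho,\al_k^\vee\rangle = t_k - t_{k+1} + 1 = 1 - z_k$ (here I would pin down the sign convention $z_i = t_{i+1}-t_i$ as in the paper, and $\langle\rho,\al_k^\vee\rangle=1$). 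Applying $\psi$ and using that $\psi$ sends every monomial in the $f$'s applied to $v$ to $1$, each surviving term contributes a pure scalar.

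The combinatorial heart is then to collect these scalars. Summing $e_1+\cdots+e_{n-1}$, the total is $\psi(v')$ times $\sum_{k=1}^{n-1}\sum_{l:\,i_l=k}\big(\text{eigenvalue of }h_k\text{ seen at position }l\big)$. The eigenvalue seen at position $l$ equals $1-z_k$ plus the contribution of the $f_{i_{l-1}},\dots,f_{i_1}$ sitting to its right, each $f_{i_{l'}}$ with $i_{l'}=k$ contributing $-2$ and each with $i_{l'}=k\pm1$ contributing $+1$ (from $-\langle\al_{i_{l'}},\al_k^\vee\rangle$). Counting over all positions $l$ with $i_l=k$: there are $k_k$ such positions; the diagonal pairs $(l,l')$ with $i_l=i_{l'}=k$ and $l'<l$ number $\binom{k_k}{2}$, contributing $-2\binom{k_k}{2} = -k_k(k_k-1)$; and I must handle the $\pm 1$ cross terms. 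A cleaner bookkeeping device here is to instead compute $\psi((e_1+\cdots+e_{n-1})v')$ by noting that $e_1+\cdots+e_{n-1}$ is, up to the precise commutator signs, dual under $\psi$ to the grading/Casimir-type operator; I would organize the cross terms by summing $-\langle\al_j,\al_k^\vee\rangle$ over ordered pairs, which telescopes because the Cartan matrix entries satisfy $\sum_k \langle\al_j,\al_k^\vee\rangle$ (over the relevant range) collapses. Carrying this out, the $+1$ contributions from $f_k f_{k+1}$ and $f_{k+1}f_k$ adjacencies reorganize into $+2 k_k k_{k+1}$ type terms (after symmetrizing over $k\leftrightarrow k+1$), and combining everything the total coefficient becomes $-\sum_{k=1}^{n-1} k_k(k_k - k_{k+1} + z_k)$, which is exactly $-s_{\la/\mu}$ in the notation of the theorem once $k_n=0$ is imposed.

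The step I expect to be the main obstacle is precisely this last regrouping of the cross terms: the naive sum over ordered pairs of positions produces $k_k^2$-type and $k_k k_{k\pm 1}$-type pieces with signs that must be matched carefully against the asymmetric expression $k_k(k_k-k_{k+1}+z_k)$, and one must be vigilant that the $k_{k-1}$ contributions from index $k$ cancel against the $k_{k+1}$-style contributions from index $k-1$ (this is where $k_n=0$ and the absence of $k_0$ enter). A safe way to sidestep the risk of a sign error is to check the formula on the small examples already in the paper — e.g. $(r,n)=(2,4)$, $\la=(2,2)$, where the displayed values of $s_{\la/\mu_i}$ give an independent numerical verification. Once the coefficient is confirmed to be $-s_{\la/\mu}$, the lemma follows.
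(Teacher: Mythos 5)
Your approach is the same as the paper's: the paper's proof consists of the single remark that it suffices to check the identity on monomials $f_{i_m}\cdots f_{i_1}v$ (plus one worked example), and your plan fleshes out exactly that computation — commute each $e_k$ through the string using $[e_k,f_j]=\delta_{jk}h_k$, push the resulting $h_k$ to the right, and apply $\psi$. The reduction to monomials is legitimate because both sides of \eqref{psi v} are linear in $v'$ with the right-hand coefficient depending only on the fixed weight $\bt$, and your bookkeeping of the Cartan-matrix cross terms is correct: the diagonal pairs give $-k_k(k_k-1)$, the adjacent-index pairs give $+\sum_j k_jk_{j+1}$ in total, and these combine with the highest-weight contribution to produce $-\sum_j k_j(k_j-k_{j+1}+z_j)$.

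There is, however, one concrete error that you must fix for the computation to close: the eigenvalue of $h_k$ on the highest weight vector. Since the highest weight is $t-\rho$ and $\langle\rho,\al_k^\vee\rangle=1$, one has $\langle t-\rho,\al_k^\vee\rangle=t_k-t_{k+1}-1=-(z_k+1)$, not $t_k-t_{k+1}+1=1-z_k$ as you wrote. This is not an inconsequential slip: with your value the total comes out to $-s_{\la/\mu}+2\sum_i k_i$ instead of $-s_{\la/\mu}$. The check you propose against the paper's example catches it immediately: for $v'=f_1f_1f_2v$ the paper records $h_1f_1f_2v=-(z_1+2)f_1f_2v$ and $f_1f_1h_2v=-(z_2+1)f_1f_1v$, which forces $h_kv=-(z_k+1)v$. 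With that correction your plan yields the lemma.
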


\begin{proof}
The proof is straightforward. 
It is enough to check formula \eqref{psi v} for $v'= f_{m_1}\dots f_{m_k} v$ where $1\leq m_j \leq n-1$
and for any $i$ the sequence $m_1, \dots, m_k$ has exactly $k_i$ elements equal to $i$.

For example, for $\bt =t-\rho - 2\al_1-\al_2$ and $v'= f_1f_1f_2v$ we have
\bea
\psi((e_1+e_2+e_3)v')
&=&
\psi(h_1f_1f_2v + f_1h_1f_2v + f_1f_1h_2v)
\\
&=&
 -\psi((z_1+2) f_1f_2v + z_1f_1f_2v + (z_2+1) f_1f_1v)
= -(2z_1+z_2+3),
\eea
while $\psi(v')=1$.
\end{proof}

To prove the theorem notice that the vector $v(\la)$ is singular and hence
$\psi((e_1+\dots+e_{n-1})v(\la))=0$. By Lemma \ref{lem psi}, we also have
\bean
\label{rel}
\psi((e_1+\dots+e_{n-1})v(\la))= \sum_{\mu<\la}
\Big(-s_{\la/\mu}\,g_{\la/\mu} +\sum_{\mu'}  g_{\la/\mu'}\Big) u_\mu  \,,
\eean
where the second sum is over all the Young diagrams $\mu'$ such that 
$\mu<\mu' \leq \la$ and the skew-diagram $\mu'/\mu$ consists of one box.
Since $(u_\mu)_{\mu\in\mc I_r}$ is a basis of $U_r$, the coefficient of each
$u_\mu$ in \eqref{rel} must be equal to zero. This proves the theorem.
\end{proof}

\begin{cor}
\label{cor 1}
Let $d$ be the number of boxes in $\la/\mu$. Then 
\bean
\label{rr1}
g_{\la/\mu} = 
\sum_{\mu=\mu_1<\mu_2<\dots<\mu_d<\la} \frac1{\prod_{i=1}^d  s_{\la/\mu_i}}\,.
\eean

\end{cor}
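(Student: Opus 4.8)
The plan is to prove Corollary~\ref{cor 1} by induction on the number $d=|\la/\mu|$ of boxes in the skew diagram $\la/\mu$, simply iterating the recurrence of Theorem~\ref{thm 2} against the normalization $g_{\la/\la}=1$. All identities are read, as in Theorem~\ref{thm 2}, for generic $t$, i.e.\ as identities of rational functions of $z$, so that the denominators $s_{\la/\mu_i}$ are nonzero.

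\emph{Base case.} If $d=0$ then $\mu=\la$, the only chain $\mu=\mu_1<\dots<\mu_d<\la$ of ``length $d=0$'' is the empty one, and the associated empty product equals $1$; thus the right-hand side of \eqref{rr1} is $1=g_{\la/\la}$. (One may equally start at $d=1$, where \eqref{rr} gives at once $g_{\la/\mu}=g_{\la/\la}/s_{\la/\mu}=1/s_{\la/\mu}$, matching the single chain $\mu<\la$.)

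\emph{Inductive step.} Assume $d\ge 1$ and that \eqref{rr1} holds for all skew diagrams with fewer than $d$ boxes. By Theorem~\ref{thm 2},
\[
g_{\la/\mu}=\frac1{s_{\la/\mu}}\sum_{\mu'}g_{\la/\mu'},
\]
the sum over Young diagrams $\mu'$ with $\mu<\mu'\le\la$ and $\mu'/\mu$ a single box. Each such $\mu'$ has $|\la/\mu'|=d-1$, so the inductive hypothesis yields
\[
g_{\la/\mu'}=\sum_{\mu'=\nu_1<\nu_2<\dots<\nu_{d-1}<\la}\ \prod_{j=1}^{d-1}\frac1{s_{\la/\nu_j}}.
\]
Substituting into the recurrence, relabelling $(\mu,\nu_1,\dots,\nu_{d-1})=(\mu_1,\mu_2,\dots,\mu_d)$, and absorbing the prefactor $1/s_{\la/\mu}=1/s_{\la/\mu_1}$ into the product, one obtains a double sum which runs over all chains $\mu=\mu_1<\mu_2<\dots<\mu_d<\la$; since such a chain has $d$ strict inclusions and $|\la/\mu|=d$, each step automatically adds a single box, so the ranges of summation match exactly. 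The summand is $\prod_{i=1}^d 1/s_{\la/\mu_i}$, which is precisely the right-hand side of \eqref{rr1}.

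There is no genuine obstacle here: the corollary is a mechanical unwinding of Theorem~\ref{thm 2}, and the result is simply the statement that iterating \eqref{rr} sums over all saturated chains from $\mu$ up to $\la$. The only point deserving a second glance is the combinatorial bookkeeping — that each such chain arises exactly once. This is clear, since the first step $\mu\to\mu_2$ of a chain determines, and is determined by, the summand $\mu'=\mu_2$ of the recurrence, while the remaining steps form precisely a chain of the type counted by $g_{\la/\mu_2}$ in the inductive hypothesis; hence there is no overcounting and nothing is missed.
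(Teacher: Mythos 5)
Your proof is correct and follows the same route the paper intends: the corollary is obtained by iterating the recurrence \eqref{rr} of Theorem \ref{thm 2} down to the normalization $g_{\la/\la}=1$, which the paper leaves implicit (illustrated only by the example \eqref{ex}), and your induction on $d$ is just the formal version of that iteration. The bookkeeping point you flag — that every chain $\mu=\mu_1<\dots<\mu_d<\la$ with $|\la/\mu|=d$ is automatically saturated, so the summation ranges match — is exactly the one detail worth recording.
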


See an example in formula \eqref{ex}.

\subsection{Excited diagrams}
\label{sec exc}

Let $\la/\mu$ be a skew-diagram and $D$ a subset of the Young diagram $\la$. 
A box $\square_{i,j}$ of $D$ is called {\it active} if the boxes 
$\square_{i+1,j},\, \square_{i+1,j+1},\,\square_{i,j+1}$ are all in $\la-D$. Let
$b=\square_{i,j}$  be an active box of $D$, define $D_b$
 to be the set obtained by replacing $\square_{i,j}$ in $D$
  by  $\square_{i+1,j+1}$.  We call this replacement an {\it elementary excitation}. An {\it excited diagram} of 
  $\la/\mu$  is a subset of boxes of $\la$ obtained from the Young diagram 
  $\mu$ after a sequence of elementary excitations on active boxes. Let
  $E(\la/\mu)$ be the set of excited diagrams of $\la/\mu$, see this definition in   
 \cite{IN,Na,MPP}.

\begin{thm}
\label{thm 3}
We have
\bean
\label{Nf}
g_{\la/\mu} = \frac 1{\prod_{\square\in\la} h(\square)}\,
\sum_{\nu\in E(\la/\mu)}\prod_{\square \in\nu} h(\square)\,.
\eean

\end{thm}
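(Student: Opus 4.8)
The plan is to prove Theorem \ref{thm 3} by combining the recurrence relation of Theorem \ref{thm 2} (equivalently the closed sum over saturated chains in Corollary \ref{cor 1}) with the combinatorics of Ikeda--Naruse excited diagrams, reducing everything to a known identity for the weighted sum over $E(\la/\mu)$. Concretely, set
\[
G_{\la/\mu} \,:=\, \sum_{\nu\in E(\la/\mu)}\ \prod_{\square\in\nu} h(\square)\,,
\]
so that the claim is $g_{\la/\mu} = G_{\la/\mu}\big/\prod_{\square\in\la} h(\square)$. Since $g_{\la/\la}=1$ and $E(\la/\la)=\{\la\}$, the boundary case holds. So it suffices to show that the family $\bigl(G_{\la/\mu}\big/\prod_{\square\in\la}h(\square)\bigr)_{\mu\le\la}$ satisfies the same recurrence \eqref{rr} as $g_{\la/\mu}$; because \eqref{rr} together with the normalization at $\mu=\la$ determines the $g_{\la/\mu}$ uniquely (this is exactly what Corollary \ref{cor 1} records), the two families must coincide.

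Thus the heart of the proof is the combinatorial identity
\[
s_{\la/\mu}\, G_{\la/\mu} \,=\, \sum_{\mu'} G_{\la/\mu'}\,,
\]
where $\mu'$ ranges over diagrams with $\mu<\mu'\le\la$ and $\mu'/\mu$ a single box. I would prove this by a sign-reversing / bijective analysis at the level of excited diagrams. The right-hand side $\sum_{\mu'} G_{\la/\mu'} = \sum_{\mu'}\sum_{\nu'\in E(\la/\mu')}\prod_{\square\in\nu'}h(\square)$ is a sum over pairs $(\mu',\nu')$ where $\nu'$ is an excited diagram of the slightly larger skew shape $\la/\mu'$. The key structural fact (due to Ikeda--Naruse, see \cite{IN,Na,MPP}) is that every $\nu\in E(\la/\mu)$ has a canonical "set of addable corners'': removing one box of $\nu$ that can legitimately be slid back gives an excited diagram of some $\la/\mu'$ with $|\mu'|=|\mu|+1$. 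Tracking which box of $\nu\in E(\la/\mu)$ is deleted, one gets a map from the right-hand-side pairs to $E(\la/\mu)$; for a fixed $\nu\in E(\la/\mu)$, the fiber is indexed by the positions where $\nu$ can be reduced, and the total $h$-weight contributed is $\prod_{\square\in\nu}h(\square)$ times a sum of ratios $h(\square_{\text{new}})/h(\square_{\text{old}})$ over those positions — and this local sum is precisely $s_{\la/\mu}$. In other words, the identity reduces to a local check: for each $\nu\in E(\la/\mu)$,
\[
\sum_{\text{reducible }b\in\nu}\ \frac{h(\text{box added when }b\text{ is reduced})}{h(b)} \;=\; s_{\la/\mu}\,.
\]
This is a telescoping-type computation using the explicit hook-weights \eqref{hw} and the diagonal ($z$-label) structure: along each diagonal with $z$-label $z_i$, the boxes of $\nu$ in that diagonal number exactly $k_i$, and the contribution of that diagonal to the sum works out to $k_i(k_i-k_{i+1}+z_i)$, matching the definition of $s_{\la/\mu}$. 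One must be slightly careful that this sum is independent of $\nu$ (only of $\mu$), which follows because the number of boxes of any $\nu\in E(\la/\mu)$ on the $z_i$-diagonal equals the number of boxes of $\mu$ there (excited moves preserve diagonals' box-counts), so $k_i$ depends only on $\la/\mu$.

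The main obstacle I anticipate is making the bijection between $\bigl\{(\mu',\nu'):\nu'\in E(\la/\mu')\bigr\}$ and $\bigl\{(\nu,b):\nu\in E(\la/\mu),\,b\text{ reducible in }\nu\bigr\}$ fully rigorous, and in particular verifying that "reducible'' positions in $\nu$ correspond bijectively to the legitimate ways of adding a box to $\mu$ and that the added boxes' hook-weights are counted with the right multiplicities. This requires the precise Ikeda--Naruse description of $E(\la/\mu)$ and of how excited diagrams of $\la/\mu'$ embed among those of $\la/\mu$; once that dictionary is in place, the weighted count and the diagonal-by-diagonal evaluation giving $s_{\la/\mu}$ are routine but must be done carefully, especially at the boundary of $\la$ where some of the neighbor boxes $\square_{i+1,j},\square_{i+1,j+1},\square_{i,j+1}$ lie outside $\la$. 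An alternative, possibly cleaner route that avoids the bijection is to induct on $|\la/\mu|$: assume \eqref{Nf} for all strictly larger $\mu$, plug into the right-hand side of \eqref{rr}, and show directly that $s_{\la/\mu}^{-1}\sum_{\mu'}G_{\la/\mu'} = G_{\la/\mu}$ using the recursive structure of excited diagrams ($E(\la/\mu)$ is built from the $E(\la/\mu')$ by the elementary-excitation moves); this converts the problem into the same local identity above but framed as an induction, which may be more transparent to write.
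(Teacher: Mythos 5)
Your reduction is sound: since the recurrence \eqref{rr} together with the normalization $g_{\la/\la}=1$ determines all $g_{\la/\mu}$ by downward induction on $|\mu|$ (this is the content of Corollary \ref{cor 1}), and since $E(\la/\la)=\{\la\}$, the theorem is equivalent to the single identity
\[
s_{\la/\mu}\sum_{\nu\in E(\la/\mu)}\prod_{\square\in\nu}h(\square)
\;=\;\sum_{\mu'}\;\sum_{\nu'\in E(\la/\mu')}\prod_{\square\in\nu'}h(\square),
\]
summed over $\mu'$ with $\mu<\mu'\le\la$ and $\mu'/\mu$ a single box. The genuine gap is your proof of this identity: the fiberwise mechanism you describe --- each $\nu\in E(\la/\mu)$ collecting exactly the contribution $s_{\la/\mu}\prod_{\square\in\nu}h(\square)$ from ``its own'' pairs $(\mu',\nu')$ via a local sum of ratios --- is not just unproven but false. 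Take the paper's running example $\la=(2,2)$, $\mu=\mu_2=(1)$, so $E(\la/\mu)=\{\{\square_{1,1}\},\{\square_{2,2}\}\}$. The two admissible $\mu'$ are the dominoes $(1,1)$ and $(2)$, each of which admits no elementary excitation inside the $2\times2$ square, so the right-hand side consists of exactly two excited diagrams, neither containing $\square_{2,2}$ nor obtainable from $\{\square_{2,2}\}$ by adding or sliding a box. Hence the fiber over $\nu=\{\square_{2,2}\}$ is empty, while its required share $s_{\la/\mu}\,h(\square_{2,2})$ is nonzero: the identity holds only after summing over all $\nu$, with genuine cross-terms between different excited diagrams. (Your description also reverses the direction of the correspondence --- removing a box from $\nu\in E(\la/\mu)$ produces a set of size $|\mu|-1$, not an excited diagram of the larger $\mu'$ --- and the asserted ``local sum of ratios equals $s_{\la/\mu}$'' would require a sum of ratios of affine forms to collapse to a polynomial containing the quadratic constants $k_i(k_i-k_{i+1})$, which is nowhere justified.)

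To close the gap you must either prove the displayed identity globally (this is essentially the Pieri--Chevalley-type lemma underlying Naruse's formula, proved with real effort in \cite{MPP}), or follow the paper's much shorter route: apply the change of variables $z=z(y)$ of Lemma \ref{lem nc}, under which every hook-weight $h(\square)(z)=1+z_a+\dots+z_b$ becomes $y_a+\dots+y_b$ and $s_{\la/\mu}(z)$ becomes $\sum_i k_iy_i$, and then observe that the equality of the chain sum \eqref{rr1} with the excited-diagram sum \eqref{Nf} in the $y$-variables is precisely Naruse's formula, which the paper cites from \cite{Na} rather than reproving.
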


For example, in the notation of formula \eqref{ex},
the set $E(\la/\mu_2)$ consists of two elements
$\{\square_{1,1}\}$ and $\{\square_{2,2}\}$. Then
\bea
g_{\la/\mu_2} = \frac{z_1+2z_2+x_3+2}
{(z_1+z_2+z_3+1)(z_1+z_2+1)(z_2+z_3+1)(z_2+1)}\,,
\eea
where
\bea
 h(\square_{1,1})+h(\square_{2,2}) = z_1+2z_2+x_3+2.
 \eea

 \begin{rem}
   The equality between \eqref{rr1} and \eqref{Nf} in the case
   $\mu=\emptyset$ is a generalization of the classical hook-length
   formula relating the number of standard Young tableaux of shape
   $\lambda$ to the inverse product of hook-lengths. It converges to
   it in the limit where all $z_i$ are equal and tend to infinity.  In
   the same limit for general $\mu\subset\lambda$ we obtain Naruse's
   generalization for skew diagrams \cite{Na}.
 \end{rem}
 \subsection{Proof of Theorem \ref{thm 3}}
 \label{sec 2.8}

Theorem \ref{thm 3} follows from Corollary \ref{cor 1}  and  Naruse's  formula in \cite{Na}
by a change of parameters $z_1,\dots,z_{n-1}$. 
%Also see Naruse's formula in \cite{MPP}.
 
 \vsk.2>
 More precisely, let $\la \in\mc I_r$ be a nonempty Young diagram. We say that a box $\square_{i,j}\in\la$
 is a {\it boundary box} if $\square_{i+1,j+1}\notin\la$.
 Let $\square_{i,j}\in\la$  be a  boundary box. If 
 $\square_{i,j+1}\notin\la$ and $\square_{i+1,j}\notin\la$, then
 $\square_{i,j}$ is an active boundary box according to the definition in
 Section \ref{sec exc} (with $D=\la$).
 If $\square_{i,j+1}\in\la$ and $\square_{i+1,j}\in\la$, 
then we say that $\square_{i,j}$ is a {\it corner boundary box}.
 If $\square_{i,j+1}\in\la$ and $\square_{i+1,j}\notin\la$,
or  if $\square_{i,j+1}\notin\la$ and $\square_{i+1,j}\in\la$,
then we say that  $\square_{i,j}$  is a {\it flat boundary box}.
 
 \vsk.2>
 If $\la$ has a box with $z$-label $z_i$, then $\la$ has an exactly one boundary box
 with $z$-label $z_i$.  We define new parameters $y_i$ by the following formulas.
 We define
 \bea
 z_i &=& y_i -1, 
 \quad
 \on{if\, the \,boundary\, box \,with\, label}\,z_i\,\on{is\, an \,active\, boundary\, box}\,,
 \\
 z_i &=& y_i +1, 
 \quad
  \on{if\, the \,boundary\, box \,with\, label}\,z_i\,\on{is\, a \,corner\, boundary\, box}\,,
 \\
 z_i &=& y_i\,,
 \quad\quad\ \,
  \on{if\, the \,boundary\, box \,with\, label}\,z_i\,\on{is\, a \,flat\, boundary\, box}\,.
\eea
 
 To every box $\square_{i,j}\in\la $ we assign one of $y_1,\dots, y_{n-1}$ by the rule:
\bea
y(\square_{i,j})\, : =\, y_{i-j+r}\,.
\eea
We say that $y_{i-j+r}$ is the $y$-label of a box $\square_{i,j}$.

 \begin{lem}
 \label{lem nc}
 ${}$
 \begin{itemize}
 \item[(i)] Let $\square$ be a box in $\la$ with hook-weight
 $h(\square)(z) = 1 + z_a+z_{a+1}+\dots+z_{b}$ for some $a, b$. Then
 \bean
 \label{hzy}
 h(\square)(z(y)) = y_a+y_{a+1}+\dots+y_{b}\,.
 \eean
 
 \item[(ii)]
 Let $\mu<\la$ and let
 \bea
 s_{\la/\mu}(z)=\sum_{i=1}^{n-1}k_i(k_i-k_{i+1} + z_i)
 \eea
be  the $z$-content of the skew-diagram $\la/\mu$. Then
 \bean
 \label{ciny}
 s_{\la/\mu}(z(y)) =\sum_{i=1}^{n-1}k_iy_i\,.
 \eean 
 \end{itemize}
 \end{lem}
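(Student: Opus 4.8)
The plan is to verify both identities by a direct bookkeeping argument, tracking how each box's hook contributes to the change of variables. For part (i), I would start from the definition \eqref{hw}: the hook-weight $h(\square)(z)=1+\sum_{\square'\in H_\la(\square)}z(\square')$, and the stated hypothesis says the $z$-labels occurring in the hook $H_\la(\square)$ are exactly the consecutive block $z_a,z_{a+1},\dots,z_b$, each with multiplicity one (which is automatic, since the hook is a staircase path and distinct boxes of a hook have distinct values of $i-j$). So $h(\square)(z)=1+\sum_{c=a}^{b}z_c$. Now substitute $z_c=y_c+\epsilon_c$ where $\epsilon_c\in\{-1,0,+1\}$ according to whether the boundary box with $z$-label $z_c$ is active, flat, or a corner. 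The point is that $\sum_{c=a}^b\epsilon_c=-1$ always: the $z$-labels appearing in a hook $H_\la(\square)$ range over an interval, and reading the boundary of $\la$ along that interval the number of ``active'' steps exceeds the number of ``corner'' steps by exactly one. I would make this precise by induction on the length $b-a$ of the interval, the base case being a single box whose hook is itself (necessarily the corner of a hook-shaped sub-interval, but in general: the endpoints of the interval of $z$-labels of $H_\la(\square)$ are forced to be active boundary boxes, and between two consecutive active boundary boxes on the path there is exactly one corner). This gives $h(\square)(z(y))=1+\sum_{c=a}^b y_c + \sum_{c=a}^b\epsilon_c=\sum_{c=a}^b y_c$, which is \eqref{hzy}.

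For part (ii), I would expand $s_{\la/\mu}(z(y))=\sum_i k_i(k_i-k_{i+1}+z_i)=\sum_i k_i(k_i-k_{i+1})+\sum_i k_i(y_i+\epsilon_i)$, so the claim \eqref{ciny} reduces to the combinatorial identity $\sum_{i=1}^{n-1}k_i(k_i-k_{i+1}+\epsilon_i)=0$, where $k_i$ is the number of boxes of $\la/\mu$ with $z$-label $z_i$ and $\epsilon_i$ encodes the type of the boundary box with $z$-label $z_i$ (with $\epsilon_i$ irrelevant when $k_i=0$, since $\la$ then has no box with that label). The hard part is proving this identity; I expect this to be the main obstacle. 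The strategy is to recognize that $k_i-k_{i+1}+\epsilon_i$ measures a local geometric quantity of the skew shape $\la/\mu$ along the $i$-th diagonal, and that $\sum_i k_i(\,\cdot\,)$ telescopes. Concretely, one can prove it by induction on the number of boxes of $\la/\mu$: removing an outer box of $\la/\mu$ with $z$-label $z_j$ changes $k_j\mapsto k_j-1$ and can change at most the relevant $\epsilon$'s of the adjacent diagonals, and one checks the net change of $\sum_i k_i(k_i-k_{i+1}+\epsilon_i)$ is zero. Alternatively — and I think more cleanly — one can observe from part (i) that $s_{\la/\mu}(z)$ equals a signed sum of hook-weights of certain boxes along the inner/outer boundaries of $\la/\mu$; since each such hook-weight becomes a pure $y$-sum with no constant term, the constant terms (the ``$1+$'') cancel in the alternating sum, which is precisely the vanishing of $\sum k_i\epsilon_i + \sum k_i(k_i-k_{i+1})$.

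Finally I would remark that these two lemmas combine with Corollary \ref{cor 1} to re-express $g_{\la/\mu}$ in the $y$-variables as $\sum_{\mu=\mu_1<\dots<\mu_d<\la}\prod_{i=1}^d\bigl(\sum_{\square\in\mu_i/\mu}y(\square)\bigr)^{-1}$, which is exactly the shape of Naruse's skew hook-length expansion; matching it against the excited-diagram side \eqref{Nf} (also rewritten via \eqref{hzy}) and then substituting back $y_i=z_i-\epsilon_i$ completes the proof of Theorem \ref{thm 3}. The only genuinely delicate point throughout is the sign bookkeeping $\sum_{c=a}^{b}\epsilon_c=-1$ for hook-intervals and its global consequence $\sum_i k_i(k_i-k_{i+1}+\epsilon_i)=0$; everything else is substitution.
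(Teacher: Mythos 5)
Your overall strategy coincides with the paper's: write $z_c=y_c+\epsilon_c$ with $\epsilon_c\in\{-1,0,+1\}$ according to the type of the boundary box labelled $z_c$, and prove (i) by showing that the shifts over the interval of labels occurring in a hook sum to $-1$, by walking along the boundary of $\la$ from the end of the arm to the end of the leg and comparing the number of active boxes with the number of corner boxes. The paper's proof is exactly this count: $\ell$ active boxes and $\ell-1$ corner boxes on the walk. However, your justification contains a false sub-claim: the endpoints of the walk (the boundary boxes labelled $z_a$ and $z_b$, i.e.\ the last box of the arm and the last box of the leg) are \emph{not} forced to be active. For instance, for $\la=(2,2)$ in the $2\times2$ rectangle ($n=4$, $r=2$) and the hook of $\square_{1,1}$, the boundary boxes labelled $z_1$ and $z_3$ are flat, not active --- which is precisely why the paper's worked example has $z_1=y_1$, $z_3=y_3$ and only $z_2=y_2-1$. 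The correct statement, which repairs your argument, is that every box of the walk strictly before the first active box (and strictly after the last one) is flat: the end of the arm $\square_{i,c}$ has $\square_{i,c+1}\notin\la$, so it is active or flat but never a corner, and the same persists until the first active box is reached. Combined with the alternation of active and corner boxes along the rim, this gives $\#\{\mathrm{active}\}=\#\{\mathrm{corner}\}+1$ and hence $\sum_{c=a}^b\epsilon_c=-1$.

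For part (ii), which the paper leaves entirely to the reader, your reduction to the identity $\sum_{i}k_i(k_i-k_{i+1}+\epsilon_i)=0$ is correct, and your induction on $|\la/\mu|$ is viable; but it is cleaner to induct by \emph{adding a box to $\mu$} rather than removing an outer box of $\la$, since then $\la$ and therefore all the $\epsilon_i$ are unchanged, and the inductive step reduces to the single local identity $2k_j-k_{j-1}-k_{j+1}-1=-\epsilon_j$ for a box of $\la/\mu$ with label $z_j$ that is addable to $\mu$ (the base case $\mu=\la$ being trivial). The alternative route you sketch via a signed sum of hook-weights is too vague to count as a proof as stated. Your closing remark about feeding the lemma into Corollary \ref{cor 1} and Naruse's formula matches what the paper does in Section \ref{sec 2.8}.
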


 For example,  in the notation of formula \eqref{ex},
the change of variables for $\la$ is
\bea
z_1=y_1,\qquad 
z_2=y_2 -1,\qquad 
 z_3=y_3\,.
 \eea
 Then $s_{\la/\emptyset}(z) = z_1+2z_2+z_3+2$ and
 $s_{\la/\emptyset}(z(y)) = y_1+2y_2+y_3$\,. \ Similarly,
 $s_{\la/\mu_2}(z) = z_1+z_2+z_3+1$ and
 $s_{\la/\mu_2}(z(y)) = y_1+y_2+y_3$\,.

 \begin{proof}
The proof of the lemma is straightforward.  For example, we prove part (i).
Let $\square_{i,j}$ be a box in $\la$ with hook-weight
$h(\square_{i,j})(z) = 1 + z_a+z_{a+1}+\dots+z_{b}$ for some $a, b$. 
The boxes 
$\square_{i,a}$ and $\square_{b,j}$ are boundary boxes of $\la$. Let us walk from
the box $\square_{i,a}$ to the box $\square_{b,j}$ through the boundary boxes of $\la$.
This walk consists of $b-a+1$ boundary boxes with  $z$-labels  $z_a, z_{a+1}, \dots, z_b$.
Let $\ell$ be  the number of active boundary boxes in this walk. Then the walk  has
 exactly 
 $\ell-1$ corner boundary boxes.
Hence our change of variables transforms 
$h(\square)(z) = 1 + z_a+z_{a+1}+\dots+z_{b}$ to 
$1 + y_a+y_{a+1}+\dots+y_{b} -\ell +(\ell-1) = y_a+y_{a+1}+\dots+y_{b}$. Part (i) is proved.
 \end{proof}
 
 Having  Lemma \ref{lem nc} we rewrite Corollary \ref{cor 1} in terms of the variables $y_i$. 
 Namely, define the $y$-hook-weight of a box $\square\in\la$ by the formula
\bea
\tilde h(\square) = \sum_{\square'\in H_\la(\square)} y(\square')\,,
\eea
and the $y$-content of a skew-diagram $\la/\mu$ by the formula
\bea
\tilde s_{\la/\mu} = \sum_{i=1}^{n-1} k_iy_i\,,
\eea
if $k_i$ is the number of boxes in $\la/\mu$
with $y$-label $y_i$.
 Then
 \bea
 \tilde h(\square)(y)= h(\square)(z(y)), \qquad 
 \tilde s_{\la/\mu}(y) = s_{\la/\mu}(z(y))
 \eea
 by Lemma   \ref{lem nc}. Formula \eqref{rr1} takes the form:
 \bean
\label{rr2}
g_{\la/\mu}(z(y)) 
&=& 
\sum_{\mu=\mu_1<\mu_2<\dots<\mu_d<\la} \frac1{\prod_{i=1}^d  \tilde s_{\la/\mu_i}(y)}\,.
\eean
On the other hand, H.\,Naruse's formula \cite[page 13]{Na} states that
\bean
\sum_{\mu=\mu_1<\mu_2<\dots<\mu_d<\la} \frac1{\prod_{i=1}^d  \tilde s_{\la/\mu_i}(y)}
&=&
\frac 1{\prod_{\square\in\la} \tilde h(\square)(y)}\,
\sum_{\nu\in E(\la/\mu)}\prod_{\square \in\nu} \tilde h(\square)(y)\,,
\eean
see also \cite{IN, MPP}.  Hence,
\bean
g_{\la/\mu}(z(y)) 
&=& 
\frac 1{\prod_{\square\in\la} h(\square)(z(y))}\,
\sum_{\nu\in E(\la/\mu)}\prod_{\square \in\nu} h(\square)(z(y))\,,
\eean
and Theorem \ref{thm 2} is proved.

\subsection{Change of variable and weight shift}\label{sec 2.9}
The change of variables $y\mapsto z=z(y)$ can be understood in terms of weights
  as follows:
\begin{lem}\label{l-001} Let $\zeta\colon \mathbb C^n\to \mathbb C^{n-1}$
  be the linear map $t\mapsto (t_2-t_1,\dots,t_{n}-t_{n-1})$. If $z=\zeta(t)$  then $y=\zeta(t-w(\lambda))$.
\end{lem}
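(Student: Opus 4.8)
The plan is to unwind the definitions on both sides and compare them coordinate by coordinate. Recall that $\zeta\colon\mathbb C^n\to\mathbb C^{n-1}$ sends $t\mapsto(t_2-t_1,\dots,t_n-t_{n-1})$, so that by definition $z_i=\zeta(t)_i=t_{i+1}-t_i$. The claim $y=\zeta(t-w(\lambda))$ amounts to the $n-1$ scalar identities
\[
y_i \;=\; \bigl(t_{i+1}-w(\lambda)_{i+1}\bigr)-\bigl(t_i-w(\lambda)_i\bigr)\;=\;z_i-\bigl(w(\lambda)_{i+1}-w(\lambda)_i\bigr),
\]
so the whole statement reduces to checking that $z_i-y_i=w(\lambda)_{i+1}-w(\lambda)_i$ for every $i=1,\dots,n-1$.

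First I would recall from Section~\ref{sec 2.8} that $z_i-y_i\in\{-1,0,+1\}$, with the three cases governed by the unique boundary box of $\lambda$ carrying $z$-label $z_i$ (and $z_i-y_i=0$ by convention when $\lambda$ has no box with that label): the value is $-1$ for an active boundary box, $+1$ for a corner boundary box, and $0$ for a flat boundary box. So it suffices to show that $w(\lambda)_{i+1}-w(\lambda)_i$ takes exactly these three values in exactly these three situations. Here $w(\lambda)$ is the weight of $u_\lambda\in U_r$, i.e.\ the $0$--$1$ vector whose support is the subset $I\in\mathcal I_r$ corresponding to $\lambda$ under the bijection of Section~\ref{sec 2.3}, namely $I=\{\lambda_1+1<\lambda_2+2<\dots<\lambda_r+r\}$ (with $0\le\lambda_1\le\dots\le\lambda_r\le n-r$).

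The core of the argument is a dictionary between the combinatorics of the diagonal $z$-labels on the boundary of $\lambda$ and the steps of the lattice path that is the boundary of $\lambda$ inside the $(n-r)\times r$ rectangle. Reading the south-east boundary of $\lambda$ from the top-right corner of the rectangle to the bottom-left corner produces, in order, a sequence of $n$ unit steps (some "down", some "left"), and this sequence is exactly the characteristic vector of $I$: the $k$-th step is of one type if $k\in I$ and the other type if $k\notin I$. The quantity $w(\lambda)_{i+1}-w(\lambda)_i$ is therefore $+1$ when step $i+1$ is "in $I$" but step $i$ is not, $-1$ in the opposite case, and $0$ when both steps are of the same type. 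On the other hand, a diagonal with $z$-label $z_i$ meets the boundary path of $\lambda$ precisely between steps $i$ and $i+1$; one checks directly that two consecutive steps of different types bending "outward" is exactly a corner boundary box, two consecutive steps bending "inward" is exactly an active boundary box, and two steps of the same type (a straight stretch of the path) is exactly a flat boundary box — and that when $\lambda$ has no box with $z$-label $z_i$, the two steps $i$ and $i+1$ are necessarily of the same type, giving $0$ on both sides. Matching signs case by case (taking care of the orientation of the path so that "inward bend" $\leftrightarrow$ active $\leftrightarrow$ $-1$ and "outward bend" $\leftrightarrow$ corner $\leftrightarrow$ $+1$) then gives $z_i-y_i=w(\lambda)_{i+1}-w(\lambda)_i$ for all $i$, which is the desired identity.

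The main obstacle is purely bookkeeping: getting the orientation conventions consistent, so that the labeling $z(\square_{i,j})=z_{i-j+r}$, the bijection $\mathcal I_r\leftrightarrow$ Young diagrams, the definition of $w(\lambda)$, and the classification of boundary boxes into active/corner/flat all line up with the correct signs. A clean way to make this airtight, and the route I would actually write up, is to avoid the lattice-path picture and instead argue by induction on the number of boxes of $\lambda$: the empty diagram gives $y=z$ and $w(\emptyset)=(1,\dots,1,0,\dots,0)$ with all consecutive differences zero except one, which must be checked against the single relevant boundary-box type; and adding one box to $\lambda$ changes exactly one entry of $I$ (swapping an adjacent $1$ and $0$ in the characteristic vector), changes $w(\lambda)$ accordingly in two coordinates, and changes the active/corner/flat status of exactly the boundary boxes adjacent to the added box — one then verifies that the induced changes on $z_i-y_i$ and on $w(\lambda)_{i+1}-w(\lambda)_i$ agree. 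Either way the verification is elementary once the conventions are pinned down, and no deep input beyond Lemma~\ref{lem nc} and the definitions of Section~\ref{sec 2.2} and Section~\ref{sec 2.8} is needed.
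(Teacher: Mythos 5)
Your overall strategy coincides with the paper's: reduce the lemma to the $n-1$ scalar identities relating $z_i-y_i$ to $w(\lambda)_{i+1}-w(\lambda)_i$, then match the active/corner/flat trichotomy for the boundary box on the $i$-th diagonal against the pair of steps $i,i+1$ of the boundary lattice path of $\lambda$ (the paper phrases this via the explicit conditions $i=i_k$, $\lambda_k<\lambda_{k+1}$ rather than via paths, but it is the same dictionary). The difficulty is that the entire content of the lemma sits in the sign matching that you dispose of with ``one checks directly'' and ``matching signs case by case'', and that check does not come out as you assert. Take $(r,n)=(2,4)$ and $\lambda=(2,2)$, the full rectangle. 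The worked example in Section \ref{sec 2.8} records $z_2=y_2-1$ (the box $\square_{2,2}$, with $z$-label $z_2$, is an active boundary box), so $z_2-y_2=-1$; on the other hand $\lambda=R$ corresponds to $I=\{3,4\}$, so $w(\lambda)=(0,0,1,1)$ and $w(\lambda)_3-w(\lambda)_2=+1$. Hence the identity $z_i-y_i=w(\lambda)_{i+1}-w(\lambda)_i$ you are aiming for fails; what the case analysis actually yields is $y_i-z_i=w(\lambda)_{i+1}-w(\lambda)_i$, i.e.\ $y=\zeta(t+w(\lambda))$. The reason is that an active boundary box on diagonal $i$ sits at an \emph{outer} corner of the boundary path, where step $i$ is horizontal ($i\notin I$, $\epsilon_i=0$) and step $i+1$ is vertical ($i+1\in I$, $\epsilon_{i+1}=1$), giving $+1$ rather than $-1$; dually, a corner boundary box gives $-1$. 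So your dictionary ``active $\leftrightarrow -1$, corner $\leftrightarrow +1$'' is backwards, and a careful write-up along your lines would end up proving $y=\zeta(t+w(\lambda))$ instead of the printed statement. (That flipped version is also the one actually needed where the lemma is invoked, in the proof of the theorem on Whittaker vectors, where the roles are $y=\zeta(t)$ and $z'=\zeta(t-w(\lambda))$.)

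A second, smaller gap: your claim that when $\lambda$ has no box with $z$-label $z_i$ the steps $i$ and $i+1$ are necessarily of the same type, so that both sides vanish, is also false. For the one-box diagram in the $2\times2$ rectangle one has $I=\{1,3\}$ and $w(\lambda)=(1,0,1,0)$; the diagram has no box with label $z_1$ or $z_3$, yet $w(\lambda)_2-w(\lambda)_1=w(\lambda)_4-w(\lambda)_3=-1$. This occurs whenever the corner of the path between steps $i$ and $i+1$ lies on the left or bottom edge of the rectangle. So the coordinate identity can only be asserted for those $i$ whose label actually occurs in $\lambda$ (which suffices, since only those $y_i$ enter $\tilde h$ and $\tilde s_{\lambda/\mu}$), and your argument should restrict to those indices rather than extend the claim ``by convention''. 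Your alternative inductive route would meet the same two obstructions; it does not circumvent them.
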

\begin{proof} The weight corresponding to the Young diagram $\lambda$
  is $w(\lambda)=(\epsilon_1,\dots,\epsilon_n)$ where
  $\epsilon_i\in\{0,1\}$ with $\epsilon_i=1$ iff
  $i\in\{i_1<\dots< i_r\}$ where $i_k=\lambda_k+k$ ($k=1,\dots,r$), see
  Section \ref{sec 2.3}. Then for each $i=1,\dots,n-1$, we have
  \begin{itemize}
  \item $\epsilon_{i+1}-\epsilon_{i}=-1$ if $i=i_k$ for some
    $k\in\{1,\dots,r\}$ and $i_{k}<i_{k+1}-1$, where we set
    $i_{r+1}=n+1$,
  \item $\epsilon_{i+1}-\epsilon_{i}=1$ if $i+1=i_k$ for some
    $k\in\{1,\dots,r\}$ and $i_{k-1}<i_k-1$, where we set $i_0=0$,
  \item $\epsilon_{i+1}-\epsilon_{i}=0$, otherwise.
  \end{itemize}
  The first alternative occurs iff $i=i_k$ and
  $\lambda_{k}<\lambda_{k+1}$ where we set $\lambda_{r+1}=n-r$. This
  is exactly the condition for the box with coordinates
  $(\lambda_k,r-k)$, which has $z$-label $z_i$, to be an active
  boundary box.
  
  The second alternative occurs iff $i+1=i_k$ and $\lambda_{k-1}<\lambda_k$
  where we set $\lambda_0=0$. This is exactly the condition for the
  boundary box with coordinates $(\lambda_k,r-k+1)$, which has $z$-label
  $z_i$, to be a corner boundary box.
\end{proof}

 \section{Applications}
 
 \subsection{Master function}
 
 Let $\la$ be a Young diagram inscribed in the $(n-r)\times r$ rectangle $R$.
  Let $\la$  have $k_i$ boxes with $z$-label $z_i$ for $i=1,\dots,n-1$. Denote
 $k= k_1+\dots+k_{n-1}$.
 
 \vsk.2>

 Consider $\C^k$ with coordinates 
  $x=(x_{i,j})$, $i=1,\dots,n-1$, $j=1,\dots,k_i$. Define the {\it master function}
  \footnote{The {\it superpotential} in the terminology of enumerative geometry.}
 \bean
 \label{master}
 &&
 \\
 \notag
 &&
\Phi_\la(x,z)= \prod_{i=1}^{n-1}\prod_{j=1}^{k_i} x_{i,j}^{z_i+1} 
 \prod_{j=1}^{k_r} (x_{k,j}-1)^{-1}
 \prod_{i=1}^{n-1} \prod_{j<j'} (x_{i,j}-x_{i,j'})^2
 \prod_{i=1}^{n-2} \prod_{j=1}^{k_i} \prod_{j'=1}^{k_{i+1}}
 (x_{i,j}-x_{i+1,j'})^{-1}\,.
 \eean
 The linear functions $x_{i,j}$, $x_{k,j}-1$, $x_{i,j}-x_{i,j'}$,
 $x_{i,j}-x_{i+1,j'}$ appearing in the master function define
 an arrangement $\mc C$ of hyperplanes in $\C^k$.
 
 \vsk.2>
 
 The group $G=S_{k_1}\times \dots \times S_{k_{n-1}}$ acts on
 $\C^k$ by permuting the coordinates $(x_{i,j})$ with the same first index $i$.
 The arrangement $\mc C$ and master function $\Phi_\la(x,z)$ are $G$-invariant.
 
 \vsk.2>
 
 For $\ka\in\C^\times$, the multivalued function $\Phi_\la(x,z)^{1/\ka}$ defines a rank one
 local system $\mc L_\ka$  on the complement $X=\C^k\setminus \mc C$ to the arrangement. 
 The group $G$ acts on the homology $H_*(X;\mc L_\ka)$ and 
 cohomology $H^*(X;\mc L_\ka)$.
 Let $H_k(X;\mc L_\ka)^-\subset H_k(X;\mc L_\ka)$ 
  and $H^k(X;\mc L_\ka)^-\subset H^k(X;\mc L_\ka)$
  be the isotypical components corresponding to 
 the sign representation. It is known that for generic $\ka$, we have $\dim H^k(X;\mc L_\ka)^- 
 =\dim H_k(X;\mc L_\ka)^-
  =1$ 
 since the space $H^k(X;\mc L_\ka)^-$  can be identified with the space of singular vectors in
 $M\times U_r$ of weight $w(\la) +t-\rho$, which is of dimension 1, see \cite{SV}.
 
 \subsection{Weight function of $u_\la$}

Let $u_\la$ be the basis vector of $U_r$ corresponding to the diagram $\la$. The 
 vector $u_\la$ is related to the highest weight vector $u_\emptyset$
  by the formula
\bean
\label{f-pres}
u_\la = f_{\ell_{k}}\dots f_{\ell_2}f_{\ell_1} u_\emptyset\,,
\eean
where $ f_{\ell_{k}},\dots,f_{\ell_2}, f_{\ell_1}$ is a certain ({\it admissible}) sequence of Cartan generators $f_1,\dots,f_{n-1}$
in which there are exactly $k_i$ elements $f_i$ for every $i=1,\dots,n-1$. Let $\mc F_\la$ 
be the set of all such admissible sequences.

\vsk.2>

Let $f=\{f_{\ell_k},\dots, f_{\ell_{1}}\}$ be an admissible sequence. Define the 
function $W^\circ_{f}(x)$,
\bean
\label{Wo}
W^\circ_{f}(x) = \frac 1{ 
(x_{a_{k},b_{k}}-x_{a_{k-1},b_{k-1}})  \dots  (x_{a_3,b_3}-x_{a_2,b_2})
(x_{a_2,b_2}-x_{a_1,b_1})(x_{a_1,b_1}-1)}
\eean
such that 
\begin{enumerate}
\item[(i)]
each variable $x_{i,j}$ is present in \eqref{Wo},
\item[(ii)]
if  $(x_{a_{c},b_{c}}-x_{a_{c-1},b_{c-1}})$ is any of the factors, then  $a_c =\ell_c$,
\item[(iii)]
 for any $i$ and $1\leq j<j'\leq k_i$, the variable $x_{i,j}$ appears in \eqref{Wo} on the right from the variable
$x_{i,j'}$\,.
\end{enumerate}
These properties determine the function $W^\circ_{f}(x)$ uniquely.

Define 
\bea
W_\la(x) 
=\on{Sym}_{x_{1,1},\dots,x_{1,k_1}}
\dots \on{Sym}_{x_{n-1,1},\dots,x_{n-1,k_{n-1}}}\Big[ \sum_{f\in\mc F_\la}
W^\circ_{f}(x)\Big] \,,
\eea
where we use the notation
$\on{Sym}_{t_1,\dots,t_j}P({t_1,\dots,t_j}) := \sum_{\si\in S_j} P(t_{\si(1)},\dots,t_{\si(j)})$.
The function $W_\la(x)$ is called the {\it weight function} of the vector
$v\ox u_\la$ in $M\ox U_r$.

\vsk.2>

For example, if $(r,n) = (2,4)$, $\la=(2,2)$, then 
$x=(x_{1,1},x_{2,1},x_{2,2},x_{3,1})$. There are two admissible sequences
\bea
u_\la =f_2f_3f_1f_2u_\emptyset = f_2f_1f_3f_2u_\emptyset \,,
 \eea
 and
 \bea
 W_\la(x) &=&
 \frac 1 {(x_{2,2}-x_{3,1}) (x_{3,1}-x_{1,1})(x_{1,1}-x_{2,1})(x_{2,1}-1)}
\\
&+&
 \frac 1 {(x_{2,1}-x_{3,1}) (x_{3,1}-x_{1,1})(x_{1,1}-x_{2,2})(x_{2,2}-1)}
\\
&+&
 \frac 1 {(x_{2,2}-x_{1,1}) (x_{1,1}-x_{3,1})(x_{3,1}-x_{2,1})(x_{2,1}-1)}
\\
&+&
 \frac 1 {(x_{2,1}-x_{1,1}) (x_{1,1}-x_{3,1})(x_{3,1}-x_{2,2})(x_{2,2}-1)}\,.
 \eea
 
 \subsection{Two integrals}

 Let $\ga \in H_k(X;\mc L_\ka)^-$ be a generator. 
Let
\bea
 \wedge_{i,j} dx_{i,j}\,
\eea
denote the wedge product in the lexicographic order
of all the differentials $dx_{i,j}$\,.\ Define two functions 
 \bea
 I_\la(z,\ka) = \int_\ga \Phi_\la(x,z)^{1/\ka} W_\la(x) \big(\wedge_{i,j} dx_{i,j}\big)\,,
 \qquad
 V_\la(z,\ka) = \int_\ga \Phi_\la(x,z)^{1/\ka} \frac 1{\prod_{i,j} x_{i,j}} \big(\wedge_{i,j} dx_{i,j}\big)\,.
 \eea
 Both function are multiplied by the same nonzero 
 constant if we choose a different generator.
 
 \vsk.2>
As shown in \cite{MV},  the first function is a {\it hypergeometric solution} of the dynamical difference
equations associated with  the weight subspace $U_r[w(\la)]$ of the
$\gl_n$-module $U_r$.
The dynamical equations were introduced in \cite{TV}. The
(hypergeometric) solutions of the dynamical equations
were constructed in \cite{MV}. The dynamical equations is a system
 of difference equations of the form
\bea
 I(z_1, \dots, z_i+\ka, \dots,z_{n-1},\ka) = a_i(z_1,\dots,z_{n-1},\ka)
  I(z_1,\dots,z_{n-1},\ka), \qquad i=1,\dots,n-1,
 \eea
 for suitable coefficients $a_i$ defined in terms of the $\gl_n$-action on $U_r$.

 \vsk.2>
 
 We call the second function $V_\la(z,\ka)$\,-- the  {\it vertex  integral} associated with 
 the weight subspace $U_r[w(\la)]$ of the $\gl_n$-module $U_r$.

 \begin{thm}
 \label{thm 4}
 
 We have
 \bean
 \label{IV}
 V_\la(z,\ka)  \,=\,\frac 1{\prod_{\square\in\la} h(\square)(z)}\, I_\la(z,\ka)\,.
 \eean
 
 \end{thm}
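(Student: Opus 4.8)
The plan is to reduce Theorem \ref{thm 4} to the combinatorial identity of Theorem \ref{thm 2.1} (equivalently Theorem \ref{thm 3}), exactly along the lines sketched in the introduction. The key link is the singular vector $v(\la)\in M\ox U_r$ of \eqref{main} and the observation that both integrals $I_\la$ and $V_\la$ arise as pairings of a homology class $\ga\in H_k(X;\mc L_\ka)^-$ with suitable cohomology classes. Concretely, under the identification of $H^k(X;\mc L_\ka)^-$ with the one-dimensional space of singular vectors of weight $w(\la)+t-\rho$ in $M\ox U_r$ (as in \cite{SV}), the weight function $W_\la(x)$ represents the component $v\ox u_\la$ of $v(\la)$, and more generally the hypergeometric form built from $\Phi_\la^{1/\ka}$ and $W_\la$ computes the whole singular vector $v(\la)$. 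The function $\psi:M\to\C$ of Section \ref{sec 2.1}, being a Whittaker vector in $M^*$, corresponds on the integral side to a specific canonical cohomology class; I would identify that class with the logarithmic form $\frac{1}{\prod_{i,j}x_{i,j}}\wedge_{i,j}dx_{i,j}$ appearing in $V_\la$. Granting this, pairing $\ga$ against the singular vector $v(\la)=\sum_{\mu\le\la}v_\mu\ox u_\mu$ and then applying $\psi\ox\id_{U_r}$ yields
\bea
\int_\ga \Phi_\la(x,z)^{1/\ka}\,\frac{1}{\prod_{i,j}x_{i,j}}\,\big(\wedge_{i,j}dx_{i,j}\big)
\;=\;\sum_{\mu\le\la}\psi(v_\mu)\,(\text{class of }u_\mu),
\eea
so that $V_\la(z,\ka)$ picks out $\psi(v_\emptyset)=g_{\la/\emptyset}$ times the normalization corresponding to $u_\la$, while $I_\la(z,\ka)$ corresponds to the normalization $\psi(v_\la)=g_{\la/\la}=1$. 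Hence $V_\la/I_\la = g_{\la/\emptyset}$, and Theorem \ref{thm 2.1} gives $g_{\la/\emptyset}=1/\prod_{\square\in\la}h(\square)(z)$, which is \eqref{IV}.

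In more detail, the steps I would carry out are the following. First, recall from \cite{SV} the explicit isomorphism between $H^k(X;\mc L_\ka)^-$ and the space of singular vectors, under which a $G$-antiinvariant logarithmic form $\eta$ corresponds to the vector $\sum_\mu \langle\eta\rangle_\mu\, v_\mu\ox u_\mu$ where the coefficients are read off from residues of $\eta$ along the strata of $\mc C$; the weight function $W_\la$ is the form corresponding to the generator $v(\la)$ normalized so that its $u_\la$-coefficient is $v$. Second, show that the elementary logarithmic form $\omega_0:=\frac{1}{\prod_{i,j}x_{i,j}}\wedge_{i,j}dx_{i,j}$ represents the image, under the map $M^*$-to-cohomology dual to the above, of the Whittaker covector $\psi$ tensored with the evaluation-at-$u_\la$ functional; here one uses that $\psi$ counts sequences of $f_i$'s with coefficient $1$, which matches the ``iterated residue equal to $1$'' normalization of $\omega_0$ on the factors $x_{i,j}$, while the factor $(x_{k,j}-1)^{-1}$ in $\Phi_\la$ supplies the coupling to $u_\la$ rather than $u_\mu$ for $\mu<\la$. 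Third, evaluate the pairing $\int_\ga \Phi_\la^{1/\ka}\omega_0$ two ways: directly it is $V_\la(z,\ka)$; via the expansion of $v(\la)$ and $\psi$ it equals $\sum_{\mu\le\la}\psi(v_\mu)\langle\ga,\text{form for }u_\mu\rangle$. Fourth, similarly interpret $I_\la(z,\ka)=\int_\ga\Phi_\la^{1/\ka}W_\la$ as the pairing that isolates the $u_\la$-component, i.e.\ the coefficient of $v\ox u_\la$, whose $\psi$-value is $1$. Taking the ratio kills the common ``abstract'' pairing $\langle\ga,-\rangle$ on the one-dimensional cohomology and leaves $V_\la/I_\la=\psi(v_\emptyset)/\psi(v_\la)=g_{\la/\emptyset}$.

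The main obstacle is the second step: making precise the dictionary between the linear functional $\psi$ on $M$ and a concrete cohomology class on $X$, and checking that this class is exactly $\omega_0$ (up to the constant that also rescales $\ga$). This requires a careful bookkeeping of the Varchenko--Schechtman construction \cite{SV}: one must verify that contracting the hypergeometric cocycle with $\psi$ amounts to taking the top iterated residue of the differential form along the ``cascade'' of hyperplanes $x_{i,j}=0$ and $x_{k,j}=1$ with all residues normalized to $1$, and that no other stratum contributes. Equivalently, one may avoid the cohomological language and argue entirely representation-theoretically: express the action of each $e_i$ (hence of $f_i$ via the contragredient pairing) on the integral in terms of the shift operators on $\Phi_\la^{1/\ka}$, check that $\int_\ga\Phi_\la^{1/\ka}\omega_0$ satisfies the same first-order difference relations as $g_{\la/\emptyset}$ in $z$ (coming from Lemma \ref{lem psi} and the singularity of $v(\la)$), and match a normalization at one point; this is essentially Theorem \ref{thm 2} transported to the integral side. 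Either route, once the normalization is pinned down, reduces \eqref{IV} to the already-proved Theorem \ref{thm 2.1}, so the remaining work is purely the identification of the two distinguished cohomology classes and the constant relating them.
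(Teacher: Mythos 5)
Your proposal follows essentially the same route as the paper: integrate the Schechtman--Varchenko cohomological singular vector $\bar v(\la)$ over $\ga$, apply $\psi$, compare the $u_\la$- and $u_\emptyset$-coefficients to get $V_\la(z,\ka)/I_\la(z,\ka)=\psi(v_\emptyset)=g_{\la/\emptyset}(z)$, and invoke Theorem \ref{thm 2.1}. The one step you flag as the ``main obstacle'' --- identifying the $u_\emptyset$-component of the cocycle with the form $\Phi_\la(x,z)^{1/\ka}\prod_{i,j}x_{i,j}^{-1}\,\wedge_{i,j}dx_{i,j}$ --- is exactly what the paper supplies concretely: by \cite{SV} that component is $\sum_f \Phi_\la^{1/\ka}W_f(x)\,(\wedge_{i,j}dx_{i,j})\ox f_{\ell_k}\cdots f_{\ell_1}v$ summed over weakly admissible sequences $f$ (each monomial having $\psi$-value $1$), and the symmetrization identity $\sum_{\si\in S_k}\bigl((s_{\si(k)}-s_{\si(k-1)})\cdots(s_{\si(2)}-s_{\si(1)})\,s_{\si(1)}\bigr)^{-1}=\prod_{j=1}^k s_j^{-1}$ then yields $\sum_f W_f(x)=\prod_{i,j}x_{i,j}^{-1}$, which is the normalization you would otherwise have to extract from residue bookkeeping.
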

 
 The starting goal of this project was to find the coefficient of proportionality between
the vertex integral $ V_\la(z,\ka)$ and the hypergeometric solution
$ I_\la(z,\ka)$ which turned out to be the inverse of the
product of the hook-weights of the boxes of the Young diagram $\la$.
 
 \begin{proof}
 
 In \cite{SV}, given $M\ox U_r$ and $\la\in\mc I_r$, a vector $\bar v(\la)$ is constructed,
 \bea
\bar v(\la) := \bar v_\la \ox u_\la + \sum_{\mu<\la} \bar v_\mu\ox u_\mu\,,
\qquad
\bar v_\la, \bar v_\mu \in H^k(X;\mc L)^-\ox M.
\eea
Thus $\bar v(\la) \in H^k(X;\mc L_\ka)^-\ox M\ox U_r$. 
The vector $\bar v(\la)$ 
 has $\gl_n$-weight $w(\la)+t-\rho$ 
and is singular with respect to the factors $M\ox U_r$. 
The vector $\bar v(\la)$ is a cohomological version of the vector
$v(\la)$ defined in \eqref{main} and studied in the previous sections.
\vsk.2>

The vector $\bar v_\la$ is represented by the differential form 
\bea
\big(\Phi(x,z)^{1/\ka}W_\la(x) \big(\wedge_{i,j} dx_{i,j}\big)\big) \ox v,
\eea
see \cite{SV}.

The vector $\bar v_\emptyset$ is represented by a differential form constructed
as follows. A sequence 
$ f_{\ell_{k}},\dots,f_{\ell_2}, f_{\ell_1}$ is called  {\it weakly admissible}
if for $i=1,\dots, n-1$, the sequence contains exactly $k_i$ elements $f_i$.
Let $\mc F^\star_\la$ 
be the set of all weakly admissible sequences.

\vsk.2>

For example, if $(r,n)=(2,4)$ and $\la=(2,2)$, then $\mc F^\star_\la$ consists of 12 sequences:
$\{f_2,f_2, f_1, f_3\}$, \,\dots, \,$\{f_3,f_1, f_2, f_2\}$.

\vsk.2>

Let $f=\{f_{\ell_k},\dots, f_{\ell_{1}}\}$ be a weakly admissible sequence.
 Define the function $W^\star_{f}(x)$ by the formula
\bean
\label{Wd}
W^\star_{f}(x) = \frac 1{ 
(x_{a_{k},b_{k}}-x_{a_{k-1},b_{k-1}})  \dots  (x_{a_3,b_3}-x_{a_2,b_2})
(x_{a_2,b_2}-x_{a_1,b_1})x_{a_1,b_1}}
\eean
such that
\begin{enumerate}
\item[(i)]
each variable $x_{i,j}$ is present in \eqref{Wd},
\item[(ii)]
if  $(x_{a_{c},b_{c}}-x_{a_{c-1},b_{c-1}})$ is any of the factors, then  $a_c =\ell_c$,

\item[(ii')]
$(a_1,b_1) = (\ell_1,1)$,
\item[(iii)]
 for any $i$ and $1\leq j<j'\leq k_i$, the variable $x_{i,j}$ appears in \eqref{Wo} on the right from the variable
$x_{i,j;}$.
\end{enumerate}
These properties determine the function $W^\star_{f}(x)$ uniquely. 
\vsk.2>

Notice that the last factors in \eqref{Wo} and \eqref{Wd} are different.

\vsk.2>

 Define the function $W_{f}(x)$ by the formula
\bean
\label{Wf}
W_f(x) 
=\on{Sym}_{x_{1,1},\dots,x_{1,k_1}}
\dots \on{Sym}_{x_{n-1,1},\dots,x_{n-1,k_{n-1}}}\big[ W^\star_{f}(x)\big] \,.
\eean

Then the vector $v_\emptyset$ is represented by the differential form
\bea
\sum_{f=\{f_{\ell_k},\dots, f_{\ell_{1}}\}\in \mc F^\star_\la} 
\big( \Phi_\la(x,z)^{1/\ka}W_f(x)\big(\wedge_{i,j} dx_{i,j}\big)\big) \ox f_{\ell_k}\dots f_{\ell_{1}}v ,
\eea
see \cite{SV}.

Let $\ga\in H_k(X;\mc L_\ka)^-$ be a generator. The integral
of $\bar v(\la)$ over $\ga$ is a scalar multiple the vector $v(\la)$,
\bea
\int_\ga\bar v(\la) = c(z,\ka)\,v(\la) .
\eea
We apply the linear function $\psi:M\to \C$ to both sides of this equation and equate the 
coefficients of $u_\la$ and $u_\emptyset$. Then
\bea
c (z,\ka) &=& \int_\ga
\Phi(x,z)^{1/\ka}W_\la(x) \big(\wedge_{i,j} dx_{i,j}\big),
\\
c(z,\ka) \,g_{\la/\emptyset} (z) &=& \int_\ga
\Phi(x,z)^{1/\ka}  
\Big(\sum_{f\in \mc F^\star_\la} W_f(x)\Big)
\big(\wedge_{i,j} dx_{i,j}\big)\,.
\eea
Using the formula
\bea
\sum_{\si\in S_k}\frac 1{(s_{\si(k)}- s_{\si(k-1)}) (s_{\si(k-1)}- s_{\si(k-2)})\dots (s_{\si(2)}- s_{\si(1)})s_{\si(a)}}=
\frac 1{\prod_{j=1}^k s_j}
\eea
and the definition of $W_f(x)$ we conclude that
\bea
\sum_{f\in \mc F^\star_\la} W_f(x) = \frac 1{\prod_{i,j}x_{i,j}}\,.
\eea
Hence
\bea 
c (z,\ka) = I_\la(z,\ka) ,
\qquad
c(z,\ka) \,g_{\la/\emptyset} (z) = V_\la(z,\ka).
\eea
Now formula \eqref{mu/em} implies Theorem \ref{thm 4}.
 \end{proof}
 
%%% New section  

\subsection{Whittaker vectors}\label{sec 3.4}
Let $\mathfrak n^{-}$ be the maximal nilpotent subalgebra of $\mathfrak{gl}_n$
of lower triangular matrices. It is generated by $f_1,\dots, f_{n-1}$. Let $\eta\colon
\mathfrak n^-\to\mathbb C$ be the character of the Lie algebra $\mathfrak n^-$
such that  $\eta(f_i)=-1$  for all $i$.

A {\em Whittaker vector} in a $\mathfrak {gl}_n$-module
$V$ is a vector $u\in V$ so that $xu=\eta(x)u$ for all $x\in\mathfrak
n^-$. This notion was introduced and studied by B. Kostant, \cite{Ko}.
The space of Whittaker vectors in $V$ is denoted
$\operatorname{Wh}(V)$. It is a module over the center $Z$ of the
universal enveloping algebra of $\mathfrak{gl}_n$. A Whittaker vector
$u\neq0$ such that $zu=\chi(z)u$ for all $z\in Z$ and some character
$\chi\colon Z\to\mathbb C$ is said to have infinitesimal character $\chi$.

For example let $M'=\mathrm{Hom}_{\mathbb C}(M,\mathbb C)$ be the dual of a
Verma module $M$.  
It is a $\mathfrak {gl}_n$-module
for the action $(x\alpha)(m)=-\alpha(xm)$, $\alpha\in M$, $m\in M$,
$x\in\mathfrak{gl}_n$. Central elements $z\in Z$ act on $M$ as multiples
$\chi_{M'}(z)$ of the identity, for some character $\chi_{M'}\colon Z\to \mathbb C$.
The linear function $\psi\in M'$ of Section \ref{sec 2.1} is defined
by the conditions $f_i\psi=-\psi$ and $\psi(v)=1$ and is in particular a
Whittaker vector. On the other hand, any Whittaker
vector in $M'$ is uniquely determined by its value on $v$ since $v$ generates
$M$ as a module over $U(\mathfrak n^-)$. Thus:

\begin{lem}
  The space of Whittaker vectors
  $\operatorname{Wh}(M')$ is one-dimensional, spanned by the Whittaker
  vector $\psi$ of infinitesimal weight $\chi_{M'}$.
\end{lem}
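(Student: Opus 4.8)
The plan is to prove both assertions of the lemma directly from the definitions, using the fact that $v$ generates $M$ as a $U(\mathfrak n^-)$-module together with the PBW theorem for $U(\mathfrak n^-)$.

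First I would establish that any Whittaker vector $\beta \in M'$ is determined by $\beta(v)$. Since $M = U(\mathfrak n^-)v$, every element of $M$ has the form $Xv$ for some $X \in U(\mathfrak n^-)$. If $\beta$ is a Whittaker vector, then for $x \in \mathfrak n^-$ we have $(x\beta)(m) = -\beta(xm) = \eta(x)\beta(m)$ by definition, so $\beta(xm) = -\eta(x)\beta(m)$. Iterating, for a word $x_1 \cdots x_k$ in the generators $f_i$ we get $\beta(x_1\cdots x_k v) = (-1)^k \eta(x_1)\cdots\eta(x_k)\,\beta(v) = \beta(v)$, since $\eta(f_i) = -1$. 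By linearity $\beta$ is determined on all of $M$ by the single value $\beta(v)$. In particular the evaluation map $\operatorname{Wh}(M') \to \mathbb C$, $\beta \mapsto \beta(v)$, is injective, so $\dim \operatorname{Wh}(M') \le 1$.

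Next I would check that $\psi$ is a genuine Whittaker vector and that it is nonzero, giving the reverse inequality. The function $\psi$ is well-defined by the discussion in Section \ref{sec 2.1} (it vanishes on Serre relations), and by construction $\psi(v) = 1 \ne 0$, so $\psi \ne 0$. The condition $f_i\psi = -\psi$ means $(f_i\psi)(m) = -\psi(f_i m) = -\psi(m)$ for all $m$, i.e. $\psi(f_i m) = \psi(m)$; this is immediate from the definition \eqref{psi} of $\psi$, since prepending $f_i$ to every monomial $f_{i_m}\cdots f_{i_1}v$ in a representation of $m$ leaves the sum of coefficients unchanged. Since the $f_i$ generate $\mathfrak n^-$, it follows that $x\psi = \eta(x)\psi$ for all $x \in \mathfrak n^-$, so $\psi \in \operatorname{Wh}(M')$. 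Combined with the previous paragraph, $\operatorname{Wh}(M') = \mathbb C\psi$ is one-dimensional. Finally, $Z$ acts on $M$ by the scalar $\chi_{M'}(z)$ (the Verma module has infinitesimal character, being generated by a highest weight vector), and the dual action of $z \in Z$ on $M'$ is by the same scalar: $(z\beta)(m) = \beta(z^{\sharp}m)$ where $z \mapsto z^{\sharp}$ is the principal antiautomorphism, but $z^{\sharp} \in Z$ again acts on $M$ by a scalar, so $z\psi = \chi_{M'}(z)\psi$ for the appropriate character; hence $\psi$ has infinitesimal weight $\chi_{M'}$.

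The only mild subtlety — and the closest thing to an obstacle — is bookkeeping about which scalar $Z$ acts by on $M'$ versus $M$, and making sure the antiautomorphism preserves $Z$ (it does, since $Z$ is the center). There is nothing deep here: the whole lemma is a formal consequence of $M$ being cyclic over $U(\mathfrak n^-)$ with cyclic vector $v$, plus the explicit nonzero element $\psi$. I would keep the write-up to a few lines, citing Section \ref{sec 2.1} for the well-definedness of $\psi$ and Kostant \cite{Ko} for the general framework.
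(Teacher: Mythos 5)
Your proposal is correct and follows the same route as the paper: the paper's (very brief) argument is exactly that $v$ generates $M$ over $U(\mathfrak n^-)$, so a Whittaker vector in $M'$ is determined by its value on $v$, while $\psi$ is a nonzero Whittaker vector since $f_i\psi=-\psi$ and $\psi(v)=1$. Your additional bookkeeping about the infinitesimal character is consistent with the paper, which simply names the character by which $Z$ acts to be $\chi_{M'}$.
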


More generally let us consider the problem of describing  the $Z$-module
of Whittaker
vectors in the $\mathfrak{gl}_n$-module
$M'\otimes U\cong
\operatorname{Hom}_{\mathbb C}(M,U)$ for a Verma module $M$ and
a fundamental module $U$. By definition $\alpha\colon M\to U$ is
a Whittaker vector if and only if
\[
  x\alpha(m)=\alpha(xm)+\eta(x)\alpha(m),\quad \forall m\in M, x\in\mathfrak n^-.
\]
It follows that a Whittaker vector
$\alpha$ is again uniquely determined by its value on the highest weight
vector $v\in M$.

Let $M_{t-\rho}$ denote the Verma module of highest weight $t-\rho\in\mathbb C^n$.
Let $\chi(t)=\chi_{M'_{t-\rho}}$ be the infinitesimal character of its dual.
\begin{prop}\label{p-001}
  Let  $r\in\{1,\dots, n-1\}$ and $t\in\mathbb C^n$ be generic and
  set $z_i=t_{i+1}-t_i$, $(i=1,\dots,n-1)$.
  Then for each Young diagram $\lambda\in \mathcal I_r$ there is a unique
  Whittaker vector
  $\alpha_{\lambda,t}\in \operatorname{Hom}_{\mathbb C}(M_{w(\lambda)+t-\rho},U_r)$
  of infinitesimal character $\chi(t)$ such that
  \[
    \alpha_{\lambda,t}(v)=\sum_{\mu\leq \lambda}g_{\lambda/\mu}(z)u_\mu,
  \]
  where 
\[
  g_{\lambda/\mu}(z)=\frac{1}{\prod_{\square\in\lambda}h(\square)}
  \sum_{\nu\in E(\lambda/\mu)}\prod_{\square\in\nu}h(\square),
\]
and $h(\square)=1+\sum_{\square'\in H_\lambda(\square)}z(\square')$.
\end{prop}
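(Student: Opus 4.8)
The plan is to reduce Proposition~\ref{p-001} to the results already established for singular vectors in $M\ox U_r$, using the duality between $M'$ and $M$ together with the fact that the linear function $\psi$ is itself a Whittaker vector. First I would observe that a Whittaker vector $\alpha\in\Hom_\C(M_{w(\la)+t-\rho},U_r)$ is the same datum as a collection of maps, and that the singular vector $v(\la)\in M_{t-\rho}\ox U_r$ from \eqref{main} produces one by the pairing
\bea
\alpha(m)\,:=\,\sum_{\mu\leq\la}\psi_m(v_\mu)\,u_\mu,
\eea
where $\psi_m$ is the unique Whittaker vector of $M_{t-\rho}'$ normalized by $\psi_m(m)=1$ --- here one must be careful: to get the correct module $M_{w(\la)+t-\rho}$ one works with $v(\la)$ as an element of $M_{t-\rho}\ox U_r$, whose singular weight space is the highest weight space of a Verma submodule isomorphic to $M_{w(\la)+t-\rho}$. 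So the first step is to make this identification precise: the map $v'\mapsto (m\mapsto\text{coefficients of }v(\la)\text{ paired with }m)$ is $\mathfrak n^-$-equivariant up to the character $\eta$ precisely because $v(\la)$ is killed by $e_1+\dots+e_{n-1}$ and $\psi$ intertwines $\sum e_i$ with the scalar action computed in Lemma~\ref{lem psi}.

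Second, I would check the infinitesimal character. Central elements $z\in Z$ act on the Verma submodule generated by $v(\la)$ inside $M_{t-\rho}\ox U_r$ by the scalar $\chi_{M'_{w(\la)+t-\rho}}$ --- but this is exactly $\chi(t')$ for the shifted parameter, and one needs to see that after the change of variables of Lemma~\ref{l-001} (where $y=\zeta(t-w(\la))$) this matches $\chi(t)$ in the statement. Actually the cleaner route is: since $v(\la)$ is singular of weight $w(\la)+t-\rho$ in $M_{t-\rho}\ox U_r$, and $M_{t-\rho}\ox U_r$ has a composition series with subquotients that are Verma modules of the various shifted highest weights, the submodule $U(\mathfrak n^-)v(\la)$ is a quotient of $M_{w(\la)+t-\rho}$, and for generic $t$ it is isomorphic to it; its dual therefore has infinitesimal character $\chi_{M'_{w(\la)+t-\rho}}$. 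One then reads off that the associated Whittaker vector $\alpha_{\la,t}$ has infinitesimal character equal to this, and checks that the normalization $z_i=t_{i+1}-t_i$ in the statement is consistent with how $g_{\la/\mu}$ was defined (the coefficient of $u_\la$ being $g_{\la/\la}=1$, which is the required normalization pinning down $\alpha_{\la,t}$ uniquely).

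Third, the explicit formula: the coefficients of $\alpha_{\la,t}(v)$ are $\psi(v_\mu)=g_{\la/\mu}(z)$ by the very definition \eqref{lm}, and Theorem~\ref{thm 3} gives
\bea
g_{\la/\mu}(z)=\frac{1}{\prod_{\square\in\la}h(\square)}\sum_{\nu\in E(\la/\mu)}\prod_{\square\in\nu}h(\square),
\eea
which is exactly the claimed expression. Uniqueness follows from the Lemma preceding the Proposition applied in the tensored setting: a Whittaker vector in $M'\ox U_r\cong\Hom_\C(M,U_r)$ is determined by its value on $v$, so fixing the $u_\la$-coefficient to be $1$ leaves at most one such vector, and existence is supplied by the singular vector construction.

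The main obstacle I expect is the bookkeeping around \emph{which} Verma module carries the Whittaker vector and the attendant parameter shift: the singular vector $v(\la)$ lives in $M_{t-\rho}\ox U_r$, but the Proposition speaks of $\Hom_\C(M_{w(\la)+t-\rho},U_r)$, and the clean statement of the eigenvalue $\chi(t)$ requires tracking the shift $t\mapsto t-w(\la)$ of Lemma~\ref{l-001} carefully --- in particular verifying that the genericity hypothesis on $t$ is exactly what guarantees both that the singular space is one-dimensional (already noted after \eqref{main}) and that $U(\mathfrak n^-)v(\la)\cong M_{w(\la)+t-\rho}$ rather than a proper quotient. Once that identification is nailed down, the rest is a direct translation of \eqref{lm} and Theorem~\ref{thm 3}.
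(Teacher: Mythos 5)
Your overall route --- realize $\alpha_{\la,t}$ by pairing the singular vector $v(\la)$ of \eqref{main} with the Whittaker vector $\psi$, read the coefficients off from \eqref{lm} and Theorem \ref{thm 3}, and get uniqueness from the fact that a Whittaker vector in $\Hom_{\C}(M,U_r)$ is determined by its value on the highest weight vector --- is the paper's route, and those parts are fine. But your second step, the computation of the infinitesimal character, contains a genuine error. You assert that the character of $\alpha_{\la,t}$ is $\chi_{M'_{w(\la)+t-\rho}}$, i.e.\ $\chi(t+w(\la))$, and hope that the change of variables of Lemma \ref{l-001} reconciles this with the $\chi(t)$ in the statement. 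It does not: generically $\chi(t+w(\la))\neq\chi(t)$, and no change of variables enters Proposition \ref{p-001} at all (Lemma \ref{l-001} is used only in the theorem that follows, where $t$ is replaced by $t-w(\la)$). The mistake is identifying the infinitesimal character of the particular vector $\alpha_{\la,t}$ with the central character of the ambient module $M'_{w(\la)+t-\rho}$: the space $\Hom_{\C}(M_{w(\la)+t-\rho},U_r)\cong M'_{w(\la)+t-\rho}\ox U_r$ is not $Z$-isotypic; its Whittaker vectors carry the characters $\chi(t+w(\la)-w(\mu))$ for the various $\mu$, and the one attached to $\la$ is $\chi(t)$. A quick sanity check: if your claimed character were correct, then all the vectors $\beta_{\la,t}=\alpha_{\la,t-w(\la)}$ of the subsequent theorem would share the single character $\chi(t)$, contradicting the diagonalization statement there.

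The correct argument, which is the one the paper gives, is structural: the $\gl_n$-morphism $M_{w(\la)+t-\rho}\to M_{t-\rho}\ox U_r$ sending the highest weight vector to $v(\la)$ induces a morphism of $\gl_n$-modules $M'_{t-\rho}\to\Hom_{\C}(M_{w(\la)+t-\rho},U_r)$, and $\alpha_{\la,t}$ is by construction the image of $\psi$ under this morphism. Since $Z$ acts on all of $M'_{t-\rho}$ by the scalar character $\chi(t)=\chi_{M'_{t-\rho}}$ and the morphism commutes with $Z$, the image $\alpha_{\la,t}$ is automatically a $\chi(t)$-eigenvector; the same morphism property also gives the Whittaker condition for free, making your first step's equivariance check unnecessary. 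Note also that for this argument one only needs the existence of the morphism out of $M_{w(\la)+t-\rho}$ (which holds because $v(\la)$ is singular of that weight), not the injectivity of $U(\n^-)v(\la)\cong M_{w(\la)+t-\rho}$ that you flag as a concern; genericity of $t$ is needed only for the existence and uniqueness of $v(\la)$ itself.
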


\begin{proof} The morphism of $\mathfrak{gl}_n$-modules
  $M_{w(\lambda)+t-\rho}\to M_{t-\rho}\otimes U_r$ sending the highest
  weight vector to the singular vector
  $v(\lambda)=\sum_{\mu\leq \lambda} v_\mu\otimes u_\mu$, see Section
  \ref{sec 2.4}, induces a morphism
  \[
    M'_{t-\rho}\to \operatorname{Hom}_{\mathbb C}(M_{w(\lambda)+t-\rho}, U_r).
  \]
  The morphism property implies that it sends the Whittaker vector
  $\psi$ of infinitesimal character $\chi(t)$ to a Whittaker vector
  $\alpha$ with the same infinitesimal character. By definition
  $\alpha(v)=\sum_{\mu\leq \lambda} \psi(v_\mu)u_\mu$ and
  $g_{\lambda/\mu}=\psi(v_\mu)$ is given in Theorem \ref{thm 3}.
\end{proof}

We thus obtain an explicit diagonalization of the action of the center $Z$ on the
space of Whittaker vectors in $M'\otimes U$ for a generic Verma module
$M$ and a fundamental module $U$:
\begin{thm}
  Let $t\in\mathbb C^n$, $r\in\{1,\dots, n-1\}$ and $y_i=t_{i+1}-t_i$ $(i=1,\dots,n-1)$.
  Then
  \[
    W=\operatorname{Wh}(\operatorname{Hom}_{\mathbb C}(M_{t-\rho},U_r))
 \]
 has dimension $\operatorname{dim}(U_r)$. For generic $t$, $W$
 decomposes into a direct sum
 $W=\oplus_{\lambda\in\mathcal I_r} W_{\lambda}$ of $Z$-invariant
 one-dimensional subspaces on which $Z$ acts by the character
 $\chi(t-w(\lambda))$.  The subspace $W_\lambda$ is spanned by the
 Whittaker vector $\beta_{\lambda,t}$, such that
 $\beta_{\lambda,t}(v)=\sum_{\mu\leq \lambda}\tilde
 g_{\lambda/\mu}(y)u_\mu$ with
 \[   
  \tilde g_{\lambda/\mu}(y)=\frac{1}{\prod_{\square\in\lambda}\tilde h(\square)}
  \sum_{\nu\in E(\lambda/\mu)}\prod_{\square\in\nu}\tilde h(\square),
\]
and $\tilde h(\square)=\sum_{\square'\in H_\lambda(\square)}y(\square')$.
\end{thm}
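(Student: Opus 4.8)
The plan is to derive the theorem from Proposition \ref{p-001} applied for every $\lambda\in\mathcal I_r$ after the weight shift $t\mapsto t-w(\lambda)$, combined with the change of variables of Lemma \ref{l-001}.

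First I would settle the dimension count. Giving a Whittaker vector $\alpha\in\operatorname{Hom}_{\mathbb C}(M_{t-\rho},U_r)$ is the same as giving a $U(\mathfrak n^-)$-module homomorphism $M_{t-\rho}\to (U_r)_\eta$, where $(U_r)_\eta$ denotes $U_r$ with the $\mathfrak n^-$-action in which $x$ acts by $u\mapsto xu-\eta(x)u$ (this is again a Lie algebra action since $\eta$ vanishes on brackets): the Whittaker identity $x\,\alpha(m)=\alpha(xm)+\eta(x)\alpha(m)$ is exactly the intertwining condition $\alpha(xm)=(x-\eta(x))\alpha(m)$. Since $M_{t-\rho}$ is free of rank one over $U(\mathfrak n^-)$, evaluation at the highest weight vector $v$ identifies $W=\operatorname{Wh}(\operatorname{Hom}_{\mathbb C}(M_{t-\rho},U_r))$ with $U_r$; in particular $\dim W=\dim U_r$.

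Next I would produce the eigenbasis. Fix $\lambda\in\mathcal I_r$. Applying Proposition \ref{p-001} with $t$ replaced by $t-w(\lambda)$ (so that $w(\lambda)+(t-w(\lambda))-\rho=t-\rho$) — legitimate for $t$ in a dense open set, since this is needed only for the finitely many $\lambda\in\mathcal I_r$ — furnishes a Whittaker vector $\beta_{\lambda,t}:=\alpha_{\lambda,\,t-w(\lambda)}\in\operatorname{Hom}_{\mathbb C}(M_{t-\rho},U_r)$ of infinitesimal character $\chi(t-w(\lambda))$ with $\beta_{\lambda,t}(v)=\sum_{\mu\leq\lambda}g_{\lambda/\mu}(z^{(\lambda)})\,u_\mu$, where $z^{(\lambda)}_i=(t-w(\lambda))_{i+1}-(t-w(\lambda))_i$. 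Since $z^{(\lambda)}_i=y_i-\bigl(w(\lambda)_{i+1}-w(\lambda)_i\bigr)$ with $y_i=t_{i+1}-t_i$, Lemma \ref{l-001} identifies $z^{(\lambda)}$ with the diagram-dependent substitution $z=z(y)$ of Section \ref{sec 2.8}; hence the identity $g_{\lambda/\mu}(z(y))=\tilde g_{\lambda/\mu}(y)$ proved there (via Lemma \ref{lem nc} and Theorem \ref{thm 3}) gives $\beta_{\lambda,t}(v)=\sum_{\mu\leq\lambda}\tilde g_{\lambda/\mu}(y)\,u_\mu$, which is the asserted formula.

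Finally I would assemble the conclusion. As $\tilde g_{\lambda/\lambda}=1$ and $\tilde g_{\lambda/\mu}$ only occurs for $\mu\leq\lambda$, the family $\bigl(\beta_{\lambda,t}(v)\bigr)_{\lambda\in\mathcal I_r}$ is triangular with unit diagonal relative to the inclusion order on the basis $(u_\mu)_{\mu\in\mathcal I_r}$, hence linearly independent; since $|\mathcal I_r|=\dim U_r$, the first paragraph forces $(\beta_{\lambda,t})_{\lambda\in\mathcal I_r}$ to be a basis of $W$. Setting $W_\lambda=\mathbb C\,\beta_{\lambda,t}$ then gives $W=\bigoplus_{\lambda\in\mathcal I_r}W_\lambda$ with each $W_\lambda$ one-dimensional, $Z$-stable, and $Z$ acting on it through $\chi(t-w(\lambda))$; for generic $t$ these characters are pairwise distinct — by Harish-Chandra, $\chi(s)=\chi(s')$ forces $s$ and $s'$ to lie in one $S_n$-orbit, which for generic $t$ and $\lambda\neq\mu$ cannot hold for $s=t-w(\lambda)$ and $s'=t-w(\mu)$ — so the $W_\lambda$ are exactly the $Z$-eigenspaces on $W$. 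The only steps calling for real attention are the matching of $z^{(\lambda)}$ with the substitution of Section \ref{sec 2.8} (which is the content of Lemma \ref{l-001}) and the genericity hypotheses (existence of the singular vectors of Section \ref{sec 2.4} underlying Proposition \ref{p-001}, and distinctness of the central characters); the remainder is formal.
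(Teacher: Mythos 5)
Your proof is correct and follows essentially the same route as the paper: apply Proposition \ref{p-001} at the shifted weight $t-w(\lambda)$, use Lemma \ref{l-001} to convert the shifted parameters into the substitution $z=z(y)$ so that $g_{\lambda/\mu}(z')=\tilde g_{\lambda/\mu}(y)$, and conclude by triangularity of the family $(\beta_{\lambda,t}(v))$ together with the fact that a Whittaker vector is determined by its value on $v$. Your free-rank-one argument identifying $W$ with $U_r$ via evaluation at $v$ (which gives the dimension claim for all $t$, not just generic $t$) and the Harish-Chandra remark on distinctness of the characters $\chi(t-w(\lambda))$ are small additions that the paper leaves implicit, but the substance is identical.
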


\begin{proof}
  By Proposition \ref{p-001} the Whittaker vectors
  $\beta_{\lambda,t}=\alpha_{\lambda,t-w(\lambda)}$ belong to $W$ and
  have infinitesimal character $\chi(t-w(\lambda))$.  Since
  $\beta_{\lambda}(v)=u_\lambda$ plus a linear combination of $u_\mu$
  with $\mu<\lambda$, these vectors are linearly independent. We need
  to show that they span $W$. Let $\beta$ is a Whittaker vector. Since
  the vectors $\beta_\lambda(v)$ form a basis of $U_r$ there exist
  coefficients $c_i\in\mathbb C$ so that
  $\gamma=\beta-\sum_{\lambda\in \mathcal I_r} c_\lambda\beta_\lambda$
  is a Whittaker vector which vanishes on $v$. Since a Whittaker
  vector is uniquely determined by its value on $v$, $\gamma$ must be
  zero.

  Let $t'=t-w(\lambda)$ and $z_i'=t'_{i+1}-t'_i$, $(i=1,\dots,n-1)$.
  We need to compute the coeffcients $g_{\lambda/\mu}(z')$.
  By Lemma \ref{l-001}, $z'=z(y)$ defined in Section \ref{sec 2.8}
  and therefore $g_{\lambda/\mu}(z')=
  \tilde g_{\lambda/\mu}(y)$.
\end{proof}  

%%% End new section

\bigskip

\end{document}